\documentclass[journal]{IEEEtran}
\usepackage{amsmath,amssymb,amsthm}
\usepackage{booktabs}
\usepackage{graphicx}
\usepackage{cite}
\usepackage{tikz}
\usetikzlibrary{arrows.meta,positioning,calc}
\usepackage{algorithm}
\usepackage{algorithmic}
\usepackage[colorlinks=true,linkcolor=blue,citecolor=blue]{hyperref}

\newtheorem{theorem}{Theorem}
\newtheorem{proposition}{Proposition}
\newtheorem{assumption}{Assumption}
\newtheorem{remark}{Remark}
\newcommand{\bL}{\mathbf{L}}
\newcommand{\bM}{\mathbf{M}}
\newcommand{\bPhi}{\boldsymbol{\Phi}}
\newcommand{\bgam}{\boldsymbol{\gamma}}
\newcommand{\rL}{\rho(\bL)}

\begin{document}

\title{Output-Only Identification and Spectral Monitoring of Coupled
Feedback Networks with Known Time-Varying Actuation}

\author{Jihwan~Woo%
\thanks{J. Woo is with Amazon Web Services (e-mail available on request).
The views expressed are those of the author and do not represent those
of Amazon Web Services or its affiliates. An application-focused
companion paper on the leveraged-fund episode is available as an arXiv
preprint; this paper develops the general identification theory.}}

\markboth{Working paper --- August 2026}%
{Woo: Output-Only Identification of Coupled Feedback Networks}

\maketitle

\begin{abstract}
Coupled feedback networks are often monitored channel by channel even
though cross-channel paths alter both stability margins and transmitted
disturbances. We study identification of a structured feedback matrix
$\bL_t=\bPhi\,\mathrm{diag}(\bgam_t)$ in an \emph{output-only}
setting: no commanded, probing, or reference input exists---only
temporally separated outputs and the scheduling gains $\bgam_t$ are
observed, while the coupling response $\bPhi$ and the
clearing-window inputs are not. Identification then rests on two
sources jointly: the persistent excitation of the observed
pre-window output, and two structural features that separate
coupling from confounds: the known time
variation of the gains, which modulates the closed-loop response in a
predictable pattern, and a partial-reversal moment, by which a known
fraction of transient displacement is corrected in a subsequent
observation window. We give a hierarchy of results: regular local
identification in the exact fixed-point model under a Jacobian rank
condition on the
gain regimes; a first-order interaction estimator whose
identification strength is the minimum eigenvalue of the residualized
interaction information matrix (provably unidentified when gains are
constant); and a characterization of the estimand as a resolvent
sensitivity---directly the right object for screening transmitted
disturbances, and a first-stage input to spectral-margin recovery.
We establish $\sqrt{T}$ asymptotics for the first-order estimator.
Under locally full-dimensional positive gain variation and an
outgoing-path nondegeneracy condition, a cross-identification
theorem maps each varying gain to point-identified resolvent rows
and columns. Pointwise bootstrap results cover both a smooth
two-stage inversion and a null-imposed outward-score test for hard
SCA; the latter has $6.7\%$ boundary size at a nominal $5\%$ and
power $25\%/85\%$ at local alternatives $0.33/0.36$. A
margin-switched construction reduces uniform spectral inference to
one open input---a degeneracy-valid margin bound---and drift and
near-degenerate null designs reject on no tested path. A log-coordinate convex
geometry yields a hard-constrained, SVD-initialized backtracking SCA
with local branch-capture guarantees; against matched box-TRF it
removes both catastrophic branches in 25 paths and reduces spectral
bias from $+0.100$ to $+0.043$. The identification guarantees require the clean pre-window
moment under stated exogeneity and orthogonality assumptions; the
full-period-regressor case is characterized as a contaminated
estimand.
Simulations verify sharpness of the rank condition, quantify the
conditioning effect of correlated gain paths, and show that
level-regression benchmarks fail under sign-adverse confounding,
while a deliberately mismatched two-regime variance-ratio
construction in the spirit of
heteroskedasticity-based identification is unstable on the same
data, whose structural shock variances are
constant by design. A case study on leveraged-fund
rebalancing feedback, where daily fund disclosures play the role of
the known gains, illustrates the method on real data.
\end{abstract}

\begin{IEEEkeywords}
System identification, dynamic networks, closed-loop identification,
linear parameter-varying systems, spectral monitoring, financial
signal processing.
\end{IEEEkeywords}


\section{Introduction}\label{sec:intro}

\IEEEPARstart{M}{any} engineered and economic systems are collections
of feedback loops that share a medium: power grids with
demand-response contracts, recommendation platforms whose rankings
feed back into the behavior being ranked, advertising exchanges where
bidding algorithms respond to prices they jointly set, and financial
markets hosting leveraged products whose mandated rebalancing trades
move the prices the mandates are written on. In all of these, each
loop is typically audited in isolation---its own gain, its own
stability margin---while the loops are coupled through the shared
medium. For a coupled system with elementwise-nonnegative gain matrix
$\bL$, the spectral radius satisfies $\rL\ge\max_i \ell_{ii}$: the
diagonal audit is guaranteed optimistic, and the gap can be
substantial when off-diagonal cycles are strong. Off the diagonal, even
one-way coupling transmits displacement from loop to loop through the
resolvent $(\mathbf{I}-\bL)^{-1}$, a channel invisible to every
per-loop statistic.

Estimating $\bL$ is an identification problem that standard tools do
not directly cover. The system operates in closed loop, but there is no
reference signal or probing input, so classical closed-loop
identification \cite{ljung1999} does not apply. The node signals form
a cyclic dynamic network, but external excitation---whose presence
and location drive identifiability in the dynamic-network literature
\cite{weerts2018,cheng2023}---is absent. The loop gains are
parameter-varying, but the input signal whose observation
linear-parameter-varying (LPV) identification requires
\cite{toth2010,cox2021} is latent. Output-only network reconstruction
without further structure is subject to fundamental limits
\cite{materassi2012,angulo2017}. And identification through
heteroskedasticity \cite{rigobon2003} exploits regime changes in
unobserved shock variances under a constant structure---the exact
dual of our setting, in which the structure varies observably while
shock variances need not.

What rescues \emph{structural} identifiability here is a feature
these frameworks do not use: \emph{the actuation gains are known}.
(Signal excitation alone cannot do it: a static confound and a
static coupling are observationally merged until something known
moves.) In the applications
above, the gain of each loop is disclosed or contractually
determined---fund assets and leverage multiples, contracted
demand-response capacity, published platform parameters---and it
varies over time for reasons unrelated to the disturbances. The known
time-varying gain acts as a natural modulating sequence: it changes
the closed-loop response in a predictable pattern, so that the
response to the \emph{same} latent input differs across gain regimes
in a way that is informative about the unknown coupling. This is the
feedback-network analogue of a classical signal-processing idea:
transmitter-induced cyclostationarity, where a known periodic
modulation renders blind FIR channel identification well posed
\cite{tsatsanis1997}. Our modulation is not designed and not
periodic---it is disclosed by the system itself---and the object it
identifies is not an open-loop channel but a closed-loop coupling
matrix inside a fixed point. A second structural feature sharpens the
moment: in our motivating application the displacement caused by the
feedback is \emph{transient}, and a known fraction $\theta$ of it
reverts in a subsequent observation window, so the reversal itself is
a signed, gain-scaled signature of the feedback path.

\emph{Contributions.} (i) We formalize output-only identification of
$\bL_t=\bPhi\,\mathrm{diag}(\bgam_t)$ from outputs and known gains
$\bgam_t$, and derive an interaction estimator based on the
reversal moment (Section~\ref{sec:problem}--\ref{sec:ident}).
(ii) We give identifiability conditions and prove their sharpness
at two levels: in the linearized coefficient map, per-entry
identification requires time variation of the corresponding gain
path and constant gains leave a continuum of observationally
equivalent (coupling, confound) pairs (Theorem~\ref{thm:ident});
regular local identification in the exact fixed-point model is
governed by the rank of a stacked
Jacobian over gain regimes (Theorem~\ref{thm:nonlinear}). (iii) We characterize the estimand as a
\emph{resolvent sensitivity} rather than the direct coupling: the
detector responds to network-reachable paths---directly the
desirable object for transmission screening, and a first-stage
input to spectral recovery---and we separate direct-edge and
spectral recovery as a second-stage inverse problem
(Section~\ref{sec:estimand}). (iv) We develop the estimation and
inference layer: $\sqrt{T}$ asymptotics for the first-order
estimator, a cross-identification theorem that maps each varying
gain to point-identified resolvent rows and columns under local
support and outgoing-path conditions, with a non-iterative SVD
estimator, bootstrap validity under consistent selection, and a
log-coordinate geometry yielding a hard-constrained SVD-initialized
SCA, with critical-point convergence and local branch/statistical
stability guarantees
(Theorems~\ref{thm:cross}--\ref{thm:bootstrap},
Proposition~\ref{prop:mmgeom}; Algorithm~\ref{alg:sca}). (v) Simulations verify the sharpness
of the rank condition and quantify failure modes of level-regression
and heteroskedasticity-based benchmarks (Section~\ref{sec:sims}).
(vi) A real-data case study on leveraged-fund rebalancing feedback
illustrates the pipeline end to end (Section~\ref{sec:case}); the
companion empirical paper \cite{woo2026arxiv} develops that
application in full, while the present paper supplies the
identification and monitoring theory.

Table~\ref{tab:delta} positions the setting against its nearest
neighbors. The precise claim is structural: to our knowledge, no
existing framework treats identification of an algebraic feedback
coupling from outputs alone when the identifying variation is a
known, column-wise, time-varying gain and the informative moment is
a signed partial reversal; individual ingredients appear separately
in the table's rows.

\begin{table}[t]
\caption{Positioning against neighboring identification frameworks.}
\vspace{2pt}
\label{tab:delta}
\centering
\footnotesize
\begin{tabular}{@{}p{2.1cm}p{1.9cm}p{1.9cm}p{1.6cm}@{}}
\toprule
Framework & Observed & Excitation & Output \\
\midrule
Closed-loop ID \cite{ljung1999} & $u,y$, reference & external reference & plant TF \\
Dynamic networks \cite{weerts2018} & nodes + excitation & external signals & module TFs \\
LPV-ID \cite{toth2010} & $u,y$, scheduling & input $u$ & $A(p),B(p)$ \\
Heteroskedasticity \cite{rigobon2003} & $y$ & changes in shock covariance & constant $A_0$ \\
MIC$^{\;\dagger}$ blind ID \cite{tsatsanis1997} & $y$, periodic precoder & designed modulation & FIR channel \\
Output-only recon. \cite{materassi2012,angulo2017} & $y$ only & none; limits apply & topology (under assumptions) \\
\textbf{This paper} & $y$, gains $\bgam_t$ & \textbf{known time-varying gains} & $\bPhi$ (1st-order), screening alarms$^{*}$ \\
\bottomrule
\end{tabular}

\vspace{2pt}
\parbox{\columnwidth}{\scriptsize $^{\dagger}$MIC = modulation-induced cyclostationarity.
$^{*}$Exact $\bPhi$ under regime-specific clean moments
(Theorem~2); spectral recovery via a heuristic second stage with
calibrated inference open (Section~V).}
\end{table}

\section{Related Work}\label{sec:related}

Four literatures approach the identification of interconnected
dynamical systems, and each leaves the present setting uncovered for
a structural reason.

\emph{Closed-loop and network identification.} Classical closed-loop
identification resolves the circularity of feedback through an
external reference or probing signal \cite{ljung1999,forssell1999};
prediction-error methods for dynamic networks generalize this to
structured interconnections \cite{vandenhof2013}, and a mature theory
now characterizes network identifiability by the presence,
location, and rank of \emph{external} excitation and disturbance
signals \cite{weerts2018,hendrickx2019,cheng2023}. Topology-oriented
variants reconstruct the interconnection structure from node spectra
\cite{goncalves2008,materassi2010,materassi2012}. Two obstructions
separate our problem from this line. First, no commanded excitation
exists---no reference, no probing, no exogenously excited node; the
identifiability currency of the network literature is unavailable.
Second, the object is a static-within-window coupling embedded in an
algebraic fixed point \eqref{eq:fp}, with a cycle at its core,
whereas sharp partial-excitation results are largely acyclic
\cite{cheng2023}. To be precise about what is and is not missing:
the pre-window output $\mathbf{r}_1$ is observed and persistently
exciting, so it supplies the forcing any regression needs; what is
absent is any \emph{commanded or probing} signal, and what takes its
place as the source of \emph{structural} identification---separating
coupling from gain-invariant confounds---is the known time variation
of the actuation gains, which has no counterpart in this theory.

\emph{LPV identification.} If the gain path $\bgam_t$ is read as a
scheduling signal, our model is a linear parameter-varying system
with known scheduling \cite{toth2010}. LPV prediction-error and
subspace methods \cite{laurain2010,cox2021} estimate the full
parameter-varying dynamics, including closed-loop variants---but all
require the input signal to be measured; identifiability is inherited
from input excitation, exactly as in the LTI case. Recent work treats
the \emph{scheduling} as latent and learns it \cite{verhoek2022}. Our
problem inverts both premises: the input is latent, the scheduling is
known, and the question is whether known scheduling variation alone
can replace input observation. To our knowledge this corner---output-only LPV identification with
scheduling entering only the feedback path---appears unoccupied,
though we state this as a literature claim open to correction
rather than a theorem.

\emph{Identification through heteroskedasticity.} Econometrics
identifies simultaneous systems from regime changes in the
\emph{variances} of latent shocks under a constant structural matrix
\cite{rigobon2003,sentana2001,lanne2008,lewis2021,kilian2017}. This
is the closest conceptual relative: both approaches buy
identification from time variation rather than exclusion restrictions
or instruments. But the two are duals, not variants. There, the
structure $\mathbf{A}_0$ is constant and the shock covariance moves,
unobserved, estimated as regimes; here, the structure
$\bL_t=\bPhi\,\mathrm{diag}(\bgam_t)$ moves \emph{observably} while
shock covariances may be constant. The duality is not cosmetic:
Section~\ref{sec:sims} includes a deliberately mismatched two-regime
variance-ratio benchmark on data generated by our model---with
structural shock variances constant by design, its variance-ratio
denominator is unstable (reduced-form variances do move with the
gains, but not in the way that construction requires), which
illustrates the model mismatch rather than numerically refuting
heteroskedasticity identification---while our estimator is
undefined on data generated by
theirs (constant gains violate the rank condition of
Theorem~\ref{thm:ident}). In the two stylized designs compared
here, each method's identifying variation is the
other's maintained constancy.

\emph{Adjacent model classes.} Three further neighbors deserve
explicit contrast. Bilinear system identification
\cite{fnaiech1987} treats products of states and \emph{inputs}, so
its excitation is again a measured input, absent here. The
varying-coefficient regression literature \cite{hastie1993}
supplies exactly the reduced-form skeleton of our
\eqref{eq:regression}---coefficients driven by an observed
effect modifier---but stops at the conditional mean: it has no
feedback fixed point, no reversal moment, and therefore no map from
fitted coefficients to a structural coupling matrix or its spectral
radius; our contribution is precisely that structural layer.
Output-only (operational) modal analysis \cite{peeters1999}
identifies modal structure from ambient responses, but under an
unobserved stationary white excitation and time-invariant
dynamics---the two assumptions our setting replaces with a known,
nonstationary gain path.

\emph{Known modulation in signal processing.} The principle that a
known time-varying modulation renders a blind problem well posed is
classical in communications: transmitter-induced cyclostationarity
and periodic precoding enable blind FIR channel identification from
second-order statistics \cite{tsatsanis1997,serpedin1998,gardner1991}.
We import the principle and change every ingredient: the modulation
$\bgam_t$ is disclosed by the system rather than designed, aperiodic
rather than cyclostationary, and enters a feedback fixed point rather
than a feedforward channel; identification additionally exploits a
signed reversal moment \eqref{eq:reversal} with no communications
analogue. Conversely, the fundamental limits established for output-only
network reconstruction \cite{materassi2012,angulo2017} delimit what
any method can do without such structure---our contribution can be read as exhibiting one structural
supplement (known gains plus reversal) under which the
limitation dissolves; the individual mathematical devices are
standard, and the novelty claim rests on the model and its use of
disclosed modulation, not on any single tool.

\section{Problem Formulation}\label{sec:problem}

\subsection{Observation model}

There are $n$ channels. At each period $t$ we observe a pre-window
output $\mathbf{r}_{1,t}\in\mathbb{R}^n$, a post-window output
$\mathbf{r}_{\mathrm{on},t}\in\mathbb{R}^n$, and the actuation gains
$\bgam_t\in\mathbb{R}_{+}^n$. The within-period feedback operates in
a clearing window between the two observations. Each channel $j$
places a mandated action proportional to its own gain and its
\emph{full-period} signal $r_{1,j,t}+r_{2,j,t}$, where
$\mathbf{r}_{2,t}$ is the (latent) clearing-window
displacement---the self-reference that creates the fixed point---and actions displace all channels
through a coupling response $\bPhi=[\phi_{ij}]\ge0$:
\begin{equation}
\mathbf{r}_{2,t}
=\bPhi\,\mathrm{diag}(\bgam_t)\,(\mathbf{r}_{1,t}+\mathbf{r}_{2,t})
+\mathbf{v}_t
=\bL_t(\mathbf{r}_{1,t}+\mathbf{r}_{2,t})+\mathbf{v}_t,
\label{eq:selfref}
\end{equation}
where $\mathbf{v}_t$ is latent clearing-window noise. Collecting
$\mathbf{r}_{2,t}$ terms, $(\mathbf{I}-\bL_t)\mathbf{r}_{2,t}
=\bL_t\mathbf{r}_{1,t}+\mathbf{v}_t$, so that whenever
$1\notin\sigma(\bL_t)$,
\begin{equation}
\mathbf{r}_{2,t}=(\mathbf{I}-\bL_t)^{-1}
(\bL_t\mathbf{r}_{1,t}+\mathbf{v}_t).
\label{eq:fp}
\end{equation}
\begin{figure}[t]
\centering
\begin{tikzpicture}[node distance=5mm and 7mm, font=\scriptsize,
 box/.style={draw, rounded corners=1pt, minimum height=6mm,
   minimum width=10mm, inner sep=2pt},
 sum/.style={draw, circle, inner sep=1pt}]
\node[sum] (s1) {$+$};
\node[box, right=of s1, text=blue!60!black] (gam) {$\mathrm{diag}(\bgam_t)$};
\node[box, right=of gam] (phi) {$\bPhi$};
\node[sum, right=of phi] (s2) {$+$};
\node[left=5mm of s1] (r1) {$\mathbf{r}_{1,t}$};
\node[above=4mm of s2] (v) {$\mathbf{v}_t$};
\node[right=5mm of s2] (r2) {$\mathbf{r}_{2,t}$};
\draw[-Latex] (r1) -- (s1);
\draw[-Latex] (s1) -- node[above]{$\mathbf{r}_1{+}\mathbf{r}_2$} (gam);
\draw[-Latex] (gam) -- (phi);
\draw[-Latex] (phi) -- (s2);
\draw[-Latex] (v) -- (s2);
\draw[-Latex] (s2) -- (r2);
\draw[-Latex, dashed] ($(s2)+(3.5mm,0)$) -- ++(0,-8mm) -|
 node[pos=0.25, below]{self-reference} (s1.south);
\node[below=0.5mm of gam, text=blue!60!black]
 {\scriptsize known, time-varying};
\end{tikzpicture}
\caption{The coupled feedback fixed point \eqref{eq:selfref}.}
\vspace{2pt}
\parbox{\columnwidth}{\footnotesize Note: the actuation gains $\mathrm{diag}(\bgam_t)$
are disclosed---the only observed element of the loop besides the
outputs---and their time variation is the identification resource;
the coupling $\bPhi$ and inputs $\mathbf{v}_t$ are unobserved. The
dashed path is the self-reference that makes $\mathbf{r}_{2,t}$ a
fixed point.}
\label{fig:block}
\end{figure}

We maintain $\rho(\bL_t)<1$, under which the resolvent admits the
Neumann expansion
\begin{equation}
(\mathbf{I}-\bL_t)^{-1}=\sum_{k=0}^{\infty}\bL_t^{k},
\qquad
\bM_t\equiv(\mathbf{I}-\bL_t)^{-1}-\mathbf{I}
=\sum_{k=1}^{\infty}\bL_t^{k},
\label{eq:neumann}
\end{equation}
whose $k$th term represents the $k$th feedback round;
off-diagonal (cross-channel) transmission enters already at $k=1$.
Two identities used repeatedly follow directly from the definition of
$\bM_t$:
\begin{equation}
(\mathbf{I}-\bL_t)^{-1}=\mathbf{I}+\bM_t,
\qquad
(\mathbf{I}-\bL_t)^{-1}\bL_t=\bM_t.
\label{eq:identities}
\end{equation}
The post-window mechanism reverses a known share
$\theta\in(0,1]$ of the displacement and carries a constant confound:
\begin{equation}
\mathbf{r}_{\mathrm{on},t}
=-\theta\,\mathbf{r}_{2,t}+\mathbf{C}\,\mathbf{r}_{1,t}+\mathbf{w}_t.
\label{eq:reversal}
\end{equation}
Unknowns: the coupling $\bPhi\in\mathbb{R}^{n\times n}$ (constant),
the confound $\mathbf{C}$ (constant; lead--lag or common-factor
loadings), and the noise second moments. Known: $\bgam_t$ (disclosed)
and $\theta$ (calibrated externally or normalized; to first order
unknown $\theta$ is a global scale ambiguity on $\bPhi$, while in
the exact model powers of $\bPhi$ enter at different orders---see
the Remark after Theorem~\ref{thm:nonlinear}).

\begin{assumption}[Exogenous gains]\label{as:exo}
$\{\bgam_t\}$ is deterministic or independent of
$\{\mathbf{r}_{1,t},\mathbf{v}_t,\mathbf{w}_t\}$, and $\mathbf{C}$
does not vary with $\bgam_t$.
\end{assumption}

Assumption~\ref{as:exo} is the exclusion restriction: the confound
may be arbitrary but must not track the disclosed gains. Its failure
mode is quantified in Section~\ref{sec:sims} (scenario C).

\begin{assumption}[Stochastic regularity]\label{as:stoch}
$\{\mathbf{r}_{1,t},\mathbf{v}_t,\mathbf{w}_t\}$ is stationary and
ergodic with finite fourth moments;
$\mathbb{E}[\mathbf{v}_t\,|\,\mathbf{r}_{1,t},\bgam_t]=
\mathbb{E}[\mathbf{w}_t\,|\,\mathbf{r}_{1,t},\bgam_t]=\mathbf{0}$;
and the pre-window signal is persistently exciting conditional on
the gains: $\mathbb{E}[\mathbf{r}_{1,t}\mathbf{r}_{1,t}^{\!\top}
\,|\,\bgam_t]\succ0$.
\end{assumption}

Gain variation alone is not sufficient: identification also needs
signal excitation and conditional orthogonality, which
Assumption~\ref{as:stoch} supplies. Theorems~\ref{thm:ident}
and~\ref{thm:nonlinear} are population-level identification results
under Assumptions~\ref{as:exo}--\ref{as:stoch}; the accompanying
finite-sample estimation theory (rates, asymptotic normality under
mixing, efficiency) is part of the UQ program of
Section~\ref{sec:uq} and is stated there as open where it is open.

\subsection{The moment, derived}

Substituting \eqref{eq:fp} into \eqref{eq:reversal} and using
\eqref{eq:identities},
\begin{align}
\mathbf{r}_{2,t}
&=(\mathbf{I}-\bL_t)^{-1}\bL_t\,\mathbf{r}_{1,t}
+(\mathbf{I}-\bL_t)^{-1}\mathbf{v}_t \nonumber\\
&=\bM_t\,\mathbf{r}_{1,t}+(\mathbf{I}+\bM_t)\,\mathbf{v}_t,
\label{eq:r2decomp}
\end{align}
so the post-window output is
\begin{equation}
\mathbf{r}_{\mathrm{on},t}
=(\mathbf{C}-\theta\bM_t)\,\mathbf{r}_{1,t}
-\theta(\mathbf{I}+\bM_t)\,\mathbf{v}_t+\mathbf{w}_t.
\label{eq:ondecomp}
\end{equation}
Under Assumption~\ref{as:stoch}, augmented with
$\mathbb{E}[\mathbf{w}_t\mathbf{v}_t^{\!\top}\,|\,\bgam_t]=\mathbf{0}$
(post-window noise orthogonal to clearing-window noise, needed only
for the full-period regressor below), the population projection of
$\mathbf{r}_{\mathrm{on},t}$ on $\mathbf{r}_{1,t}$ \emph{conditional
on the gain regime} $\bgam_t$ is (writing
$\boldsymbol{\Sigma}_1=\mathbb{E}[\mathbf{r}_{1,t}\mathbf{r}_{1,t}^{\!\top}\,|\,\bgam_t]$,
$\boldsymbol{\Sigma}_v=\mathbb{E}[\mathbf{v}_t\mathbf{v}_t^{\!\top}\,|\,\bgam_t]$,
which by gain independence (Assumption~\ref{as:exo}) coincide with
their unconditional counterparts)
\begin{align}
\mathbf{B}(\bgam_t)
&=\mathbb{E}[\mathbf{r}_{\mathrm{on},t}\mathbf{r}_{1,t}^{\!\top}\,|\,\bgam_t]\,
\mathbb{E}[\mathbf{r}_{1,t}\mathbf{r}_{1,t}^{\!\top}\,|\,\bgam_t]^{-1}
\nonumber\\
&=(\mathbf{C}-\theta\bM_t)\,\boldsymbol{\Sigma}_1\,
\boldsymbol{\Sigma}_1^{-1}
=\mathbf{C}-\theta\,\bM_t.
\label{eq:momentclean}
\end{align}
The reversal moment thus identifies $\bM_t$ up to the known scale
$\theta$ and the additive confound---\emph{exactly}, with no
small-gain approximation, when the regressor is the pre-window
output. If instead the available regressor is the full-period output
$\mathbf{R}_t=\mathbf{r}_{1,t}+\mathbf{r}_{2,t}
=(\mathbf{I}+\bM_t)(\mathbf{r}_{1,t}+\mathbf{v}_t)$, the two
projection ingredients follow from \eqref{eq:ondecomp} and the
orthogonality assumptions:
\begin{align}
\mathbb{E}[\mathbf{r}_{\mathrm{on}}\mathbf{R}^{\!\top}\,|\,\bgam_t]
&=\big[\mathbf{C}\boldsymbol{\Sigma}_1
-\theta\bM_t\boldsymbol{\Sigma}_1\nonumber\\
&\qquad-\theta(\mathbf{I}+\bM_t)
\boldsymbol{\Sigma}_v\big](\mathbf{I}+\bM_t)^{\!\top},
\label{eq:crossmom}\\
\mathbb{E}[\mathbf{R}\mathbf{R}^{\!\top}\,|\,\bgam_t]
&=(\mathbf{I}+\bM_t)(\boldsymbol{\Sigma}_1+\boldsymbol{\Sigma}_v)
(\mathbf{I}+\bM_t)^{\!\top},
\label{eq:regmom}
\end{align}
and dividing \eqref{eq:crossmom} by \eqref{eq:regmom} gives the
exact estimand
\begin{align}
\mathbf{B}(\bgam_t)=
\big[\mathbf{C}\boldsymbol{\Sigma}_1-\theta\,
\bM_t\boldsymbol{\Sigma}_1&-\theta(\mathbf{I}+\bM_t)
\boldsymbol{\Sigma}_v\big]\nonumber\\
&\times(\boldsymbol{\Sigma}_1+\boldsymbol{\Sigma}_v)^{-1}
(\mathbf{I}+\bM_t)^{-1},
\label{eq:estimand}
\end{align}
verified numerically in the replication code. Two consequences
matter. First, the confound no longer enters as a clean additive
intercept: even as $\boldsymbol{\Sigma}_v\to0$ the confound term is
$\mathbf{C}(\mathbf{I}+\bM_t)^{-1}$, itself gain-dependent, so the
constant-confound separation that drives
Theorem~\ref{thm:ident} is exact only for the pre-window regressor.
The theory below is therefore stated for the pre-window moment
\eqref{eq:momentclean}; when only full-period outputs are available,
\eqref{eq:estimand} defines the contaminated estimand. The
discrepancy from the clean case is $O(\|\bM\|)$ under bounded
$\mathbf{C}$---even at $\boldsymbol{\Sigma}_v=\mathbf{0}$ the
full-period estimand is
$(\mathbf{C}-\theta\bM_t)(\mathbf{I}+\bM_t)^{-1}$, still
gain-contaminated---while only the latent-noise component vanishes
with the clearing-window share $\boldsymbol{\Sigma}_v$. A small
clearing-window variance therefore does \emph{not} by itself license
the clean theorem; the convention analysis for the financial
application is in the companion paper \cite{woo2026arxiv}. The identification question is when the
observable map $\bgam\mapsto\mathbf{B}(\bgam)$ pins down
$(\bPhi,\mathbf{C})$.

\section{Identifiability}\label{sec:ident}

To first order in $\|\bL_t\|$ (the empirically relevant regime;
higher-order effects are treated as bias and quantified in
Section~\ref{sec:sims}), $\bM_t\approx\bL_t=\bPhi\,
\mathrm{diag}(\bgam_t)$ and \eqref{eq:momentclean} becomes entrywise
\begin{equation}
b_{ij}(t)= c_{ij}-\theta\,\phi_{ij}\,\gamma_{j,t}
+O(\|\bL\|^2),
\label{eq:linmoment}
\end{equation}
the observable slope varies \emph{linearly in the known signal}
$\gamma_{j,t}$, with the confound as intercept. Throughout,
$\tilde\gamma_{j,t}=(\gamma_{j,t}-\bar\gamma_j)/s_{\gamma_j}$ denotes
the standardized gain path (sample mean and standard deviation over
the estimation window), $\tilde{\boldsymbol{\Gamma}}\in\mathbb{R}^{T\times n}$ stacks them, and
$\boldsymbol{\Sigma}_r=\mathbb{E}[\mathbf{r}_{1,t}\mathbf{r}_{1,t}^{\!\top}]$.

\begin{theorem}[Identification and sharpness, linearized]
\label{thm:ident}
Parts (i)--(ii) concern identification from the conditional
coefficient map $t\mapsto b_{ij}(t)$ in \eqref{eq:linmoment}; part
(iii) concerns the pooled raw-data regression, whose rank condition
is $\lambda_{\min}(\mathbf{Q})>0$ and does not follow from (i)
alone.\\
(i) The pair $(c_{ij},\theta\phi_{ij})$ is identified if and only if
the design $[\mathbf{1},\ \gamma_{j,1:T}]$ has full column rank,
i.e., $\mathrm{Var}_t(\gamma_{j,t})>0$. (ii) If
$\mathrm{Var}_t(\gamma_{j,t})=0$, the set of observationally
equivalent pairs is the line
$\{(c_{ij}+\theta s\gamma_j,\ \phi_{ij}+s):s\in\mathbb{R}\}$; no
estimator can distinguish coupling from confound. (iii) In the joint regression across senders, the information matrix
of the interaction block is
$\mathbf{Q}=T^{-1}\mathbb{E}[\mathbf{Z}^{\!\top}
\mathbf{M}_{X}\mathbf{Z}]$, where the columns of $\mathbf{Z}$ are
$z_{j,t}=r_{1,j,t}\tilde\gamma_{j,t}$ and $\mathbf{M}_X$
residualizes the intercept and main effects; identification strength
is $\lambda_{\min}(\mathbf{Q})$. Under gains independent of the
pre-window signals,
$\mathbb{E}[z_{j}z_{k}]=[\boldsymbol{\Sigma}_r\circ\mathbf{G}]_{jk}$
(Hadamard product), with
$\mathbf{G}=T^{-1}\tilde{\boldsymbol{\Gamma}}^{\!\top}
\tilde{\boldsymbol{\Gamma}}$ the Gram matrix of standardized gain
paths: the conditioning of the gain design matters exactly to the
extent that the signals themselves are cross-correlated, and for
strongly factor-driven signals ($\boldsymbol{\Sigma}_r$ close to
rank one) $\kappa(\mathbf{G})$ becomes the governing factor of the
variance inflation.
\end{theorem}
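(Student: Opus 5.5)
Parts (i) and (ii) reduce to a linear-algebra statement about the map $t\mapsto b_{ij}(t)$ in \eqref{eq:linmoment}, with the $O(\|\bL\|^2)$ remainder dropped as the theorem's "linearized" qualifier permits. Stack the observable coefficients over $t=1,\dots,T$ as $\mathbf{b}_{ij}=[\mathbf{1},\ \gamma_{j,1:T}]\,(c_{ij},\,-\theta\phi_{ij})^{\!\top}$. Each $b_{ij}(t)$ is itself identified from the conditional projection \eqref{eq:momentclean}, using the persistent excitation $\mathbb{E}[\mathbf{r}_1\mathbf{r}_1^{\!\top}\,|\,\bgam_t]\succ0$ of Assumption~\ref{as:stoch} and the gain-invariance of $\mathbf{C}$ from Assumption~\ref{as:exo}. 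So the question is only whether this linear map from $(c_{ij},\theta\phi_{ij})$ to $\mathbf{b}_{ij}$ is injective. That holds iff the $T\times2$ design has full column rank. Since the two columns are linearly dependent exactly when $\gamma_{j,t}$ is constant in $t$, this is equivalent to $\mathrm{Var}_t(\gamma_{j,t})>0$, which proves (i). Explicitly, the inverse is the OLS formula $\theta\phi_{ij}=-\mathrm{Cov}_t(b_{ij},\gamma_j)/\mathrm{Var}_t(\gamma_j)$, with $c_{ij}=\bar b_{ij}+\theta\phi_{ij}\bar\gamma_j$. Because $\theta$ is known, identifying $\theta\phi_{ij}$ is the same as identifying $\phi_{ij}$.

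For (ii), set $\gamma_{j,t}\equiv\gamma_j$. The kernel of the design is spanned by $(\gamma_j,1)$ in the $(c,-\theta\phi)$ coordinates. Hence $(c_{ij}+\theta s\gamma_j,\ \phi_{ij}+s)$ yields $c_{ij}+\theta s\gamma_j-\theta(\phi_{ij}+s)\gamma_j=b_{ij}$ for every $s$. Every such pair therefore generates the same conditional coefficient path, and thus the same second moments of $(\mathbf{r}_1,\mathbf{r}_{\rm on})$ entering \eqref{eq:momentclean}. By the rank argument the fibre is exactly this line. To make "no estimator can distinguish" rigorous, I will check that the full first-order law of the observables depends on $(\mathbf{C},\bPhi)$ only through $\mathbf{C}-\theta\bL_t$. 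This follows from \eqref{eq:ondecomp} once the noise terms are taken at first order, where $\theta\bM_t\mathbf{v}_t$ is negligible or its law is absorbed into the unrestricted noise second moments. I expect this to be the main subtlety, because the noise term $-\theta(\mathbf{I}+\bM_t)\mathbf{v}_t$ nominally depends on $\bPhi$. I will handle it by treating the noise second moments as unknown nuisance parameters, as the problem formulation allows. The equivalence class then includes a matching reparametrization of $\boldsymbol{\Sigma}$.

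For (iii), write the pooled regression for row $i$ as $r_{{\rm on},i,t}=\alpha_i+\sum_j(\beta_{ij}r_{1,j,t}+\delta_{ij}\gamma_{j,t})+\sum_j\kappa_{ij}z_{j,t}+e_{i,t}$. With the standardized gains, $\kappa_{ij}=-\theta\phi_{ij}s_{\gamma_j}$ and the main effects absorb $c_{ij}-\theta\phi_{ij}\bar\gamma_j$. The Frisch--Waugh--Lovell theorem gives the interaction-block information as $T^{-1}\mathbb{E}[\mathbf{Z}^{\!\top}\mathbf{M}_X\mathbf{Z}]$. The block is identified iff this matrix is nonsingular, and its smallest eigenvalue controls the variance of the block estimator via the sandwich form, so $\lambda_{\min}(\mathbf{Q})$ measures identification strength.

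It remains to compute the moment. By independence of gains and signals, $\mathbb{E}[z_jz_k]=\mathbb{E}[r_{1,j}r_{1,k}]\,\mathbb{E}[\tilde\gamma_j\tilde\gamma_k]=[\boldsymbol{\Sigma}_r]_{jk}G_{jk}$, which is $\boldsymbol{\Sigma}_r\circ\mathbf{G}$, with $\mathbf{G}$ replaced by its sample Gram matrix when gains are deterministic. Residualization on the main effects does not change this leading term, because $z_j$ is orthogonal to $\mathbf{1}$, $r_{1,k}$ and $\gamma_k$ in population ($\mathbb{E}\tilde\gamma_j=0$ plus independence), up to third-moment terms that I will assume vanish or note. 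The conditioning remarks then follow from two limiting cases. By Schur's product theorem, if $\boldsymbol{\Sigma}_r$ is diagonal then $\mathbf{Q}$ is diagonal regardless of $\mathbf{G}$. If $\boldsymbol{\Sigma}_r=\mathbf{a}\mathbf{a}^{\!\top}$, then $\mathbf{Q}=\mathrm{diag}(\mathbf{a})\,\mathbf{G}\,\mathrm{diag}(\mathbf{a})$, whose conditioning is governed by $\kappa(\mathbf{G})$.
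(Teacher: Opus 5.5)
Your proposal is correct and follows the paper's proof. For (i) you use injectivity of the stacked design $[\mathbf{1},\ \gamma_{j,1:T}]$, for (ii) the explicit equivalence line, and for (iii) the independence factorization $\mathbb{E}[z_jz_k]=[\boldsymbol{\Sigma}_r]_{jk}G_{jk}$, with centered gains making the interaction block orthogonal to the main effects in population.

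Two of your steps go a little beyond the paper. Your noise-reparametrization remark in (ii) extends the paper, which argues only at the level of the coefficient map. Your rank-one identity $\mathbf{Q}=\mathrm{diag}(\mathbf{a})\,\mathbf{G}\,\mathrm{diag}(\mathbf{a})$ handles heterogeneous factor loadings exactly, whereas the paper uses an equicorrelation model to interpolate between the uncorrelated and factor-dominated extremes.

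One small correction concerns the gain main effects $\gamma_{k,t}$, which you put in $\mathbf{X}$; the paper residualizes only on the intercept and the signals $r_{1,k,t}$. With them included, $\mathbb{E}[z_j\gamma_k]=\mathbb{E}[r_{1,j}]\,\mathbb{E}[\tilde\gamma_j\gamma_k]$, so orthogonality requires $\mathbb{E}[\mathbf{r}_1]=\mathbf{0}$; otherwise $\boldsymbol{\Sigma}_r$ must be read as the centered covariance. This is a first-moment caveat rather than a third-moment one, and it does not affect the Hadamard conclusion.
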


Part (iii) is not a technicality: in the motivating application,
product launches make all gains ramp together, and
Section~\ref{sec:sims} shows the false-alarm rate of a signed
detector nearly doubling between a correlated staircase design
(condition number ${\sim}247$) and independent regime switching
(${\sim}12$) at equal per-channel variances.

\begin{IEEEproof}
(i) Stacking \eqref{eq:linmoment} over $t$ gives, per entry $(i,j)$,
the linear system $\mathbf{b}_{ij}=\mathbf{X}_j\boldsymbol{\beta}$
with $\mathbf{X}_j=[\mathbf{1},\ \gamma_{j,1:T}]\in\mathbb{R}^{T\times2}$
and $\boldsymbol{\beta}=(c_{ij},-\theta\phi_{ij})^{\!\top}$. The map
$\boldsymbol{\beta}\mapsto\mathbf{X}_j\boldsymbol{\beta}$ is
injective iff $\mathrm{rank}(\mathbf{X}_j)=2$, which holds iff
$\gamma_{j,t}$ is not constant, i.e.,
$\mathrm{Var}_t(\gamma_{j,t})>0$.
(ii) If $\gamma_{j,t}\equiv\gamma_j$, then for any $s\in\mathbb{R}$
the pair $(c_{ij}+\theta s\gamma_j,\ \phi_{ij}+s)$ generates the
identical moment:
\begin{equation}
(c_{ij}+\theta s\gamma_j)-\theta(\phi_{ij}+s)\gamma_j
=c_{ij}-\theta\phi_{ij}\gamma_j,
\label{eq:equivline}
\end{equation}
so the observationally equivalent set is a line and no statistic of
the data distinguishes its points.
(iii) Here $\mathbf{X}$ contains the intercept and the $n$ signal
main effects $r_{1,k,t}$, gains are centered
($\mathbb{E}[\tilde\gamma_j]=0$) and independent of signals, so
$\mathbb{E}[z_{j,t}r_{1,k,t}]=\mathbb{E}[r_{1,j}r_{1,k}]\,
\mathbb{E}[\tilde\gamma_j]=0$ and residualization leaves the
interaction block orthogonal to $\mathbf{X}$ in population. Then,
\begin{equation}
\mathbb{E}[z_{j,t}z_{k,t}]
=\mathbb{E}[r_{1,j}r_{1,k}]\,
\mathbb{E}[\tilde\gamma_{j}\tilde\gamma_{k}]
=[\boldsymbol{\Sigma}_r]_{jk}\,[\mathbf{G}]_{jk},
\label{eq:hadamard}
\end{equation}
i.e.\ the interaction Gram matrix is the Hadamard product
$\boldsymbol{\Sigma}_r\circ\mathbf{G}$; residualizing intercept and
main effects yields $\mathbf{Q}$ as stated. By the Schur product
theorem $\boldsymbol{\Sigma}_r\circ\mathbf{G}\succeq0$, and for the equicorrelation model
$\boldsymbol{\Sigma}_r=\sigma^2[(1-\varrho)\mathbf{I}
+\varrho\mathbf{1}\mathbf{1}^{\!\top}]$ (with $\varrho$ the signal
correlation, distinct from the spectral radius $\rho$),
\begin{align}
\boldsymbol{\Sigma}_r\circ\mathbf{G}
&=\sigma^2\big\{(1-\varrho)\,\mathrm{diag}(G_{jj})
+\varrho\,\mathbf{G}\big\},\nonumber\\
\lambda_{\min}(\boldsymbol{\Sigma}_r\circ\mathbf{G})
&\ge\sigma^2\big[(1-\varrho)\min_j G_{jj}
+\varrho\,\lambda_{\min}(\mathbf{G})\big]:
\label{eq:equicorr}
\end{align} as
$\varrho\uparrow1$ (factor-dominated signals) the bound is governed
by $\lambda_{\min}(\mathbf{G})$, while for $\varrho=0$
(uncorrelated signals) $\boldsymbol{\Sigma}_r\circ\mathbf{G}
=\sigma^2\mathrm{diag}(G_{jj})$ and gain-path correlation is
harmless. Degenerate $\mathbf{G}$ with factor-driven signals
returns linear combinations of senders to the unidentified case of
(ii).
\end{IEEEproof}

\begin{theorem}[Regular local identification, exact fixed point]
\label{thm:nonlinear}
Let $\mathbf{B}(\bgam)=\mathbf{C}-\theta\bM(\bgam)$ with
$\bM(\bgam)=(\mathbf{I}-\bPhi\,\mathrm{diag}(\bgam))^{-1}-\mathbf{I}$
be observed at gain regimes $\bgam^{(1)},\dots,\bgam^{(q)}$,
$\rho(\bPhi\,\mathrm{diag}(\bgam^{(k)}))<1$ for all $k$ (regimes
recur with positive probability, or are estimable by local
smoothing, so that the regime-specific moments are available). Define, for
each regime, the $n^2\times n^2$ Jacobian block
\begin{equation}
\mathbf{D}_k \equiv
\frac{\partial\,\mathrm{vec}\,\bM(\bgam^{(k)})}
{\partial\,\mathrm{vec}\,\bPhi}
=\big[\mathrm{diag}(\bgam^{(k)})\,\boldsymbol{\mathcal{R}}_k\big]^{\!\top}
\!\otimes\boldsymbol{\mathcal{R}}_k,
\label{eq:jacobian}
\end{equation}
where $\boldsymbol{\mathcal{R}}_k=(\mathbf{I}-\bPhi\,\mathrm{diag}
(\bgam^{(k)}))^{-1}$ (calligraphic to distinguish the resolvent from
the full-period output $\mathbf{R}_t$). Then full column rank of the stacked differenced Jacobian,
\begin{equation}
\mathrm{rank}\!\begin{bmatrix}
\mathbf{D}_2-\mathbf{D}_1\\ \vdots\\ \mathbf{D}_q-\mathbf{D}_1
\end{bmatrix}=n^2,
\label{eq:rankcond}
\end{equation}
is sufficient for local identification of $(\mathbf{C},\bPhi)$ at
the true parameters, and necessary for \emph{regular} (first-order)
local identification, i.e.\ on the set where the rank is locally
constant; rank deficiency at isolated points does not by itself
imply non-identification.
In particular $q\ge2$ distinct regimes are necessary; if all
$\bgam^{(k)}$ coincide, every $\mathbf{D}_k$ is identical, the
stacked matrix is zero, and constant gains again yield a continuum
of observationally equivalent pairs, whose first-order tangent
contains the lines of Theorem~\ref{thm:ident}(ii).
\end{theorem}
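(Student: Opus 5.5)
The plan is to eliminate the confound by differencing across regimes and then apply the inverse-function and constant-rank theorems to the resulting map, which depends on $\bPhi$ alone.

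\emph{Step 1: remove $\mathbf{C}$ and set up the map.} Since $\mathbf{C}$ is gain-invariant (Assumption~\ref{as:exo}), the observable moments $\mathbf{B}(\bgam^{(k)})=\mathbf{C}-\theta\bM(\bgam^{(k)})$ give
\[
\mathbf{B}(\bgam^{(k)})-\mathbf{B}(\bgam^{(1)})=-\theta\big(\bM(\bgam^{(k)})-\bM(\bgam^{(1)})\big),
\]
which does not involve $\mathbf{C}$. Define $F(\bPhi)=\big(\mathrm{vec}[\bM(\bgam^{(k)})-\bM(\bgam^{(1)})]\big)_{k=2}^q$ on the open set where every $\rho(\bPhi\,\mathrm{diag}(\bgam^{(k)}))<1$. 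On that set $F$ is real-analytic, being a rational function of the entries of $\bPhi$.

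\emph{Step 2: compute the Jacobian \eqref{eq:jacobian}.} Write $\bL=\bPhi\,\mathrm{diag}(\bgam)$ and use $d(\mathbf{I}-\bL)^{-1}=\boldsymbol{\mathcal{R}}\,d\bL\,\boldsymbol{\mathcal{R}}$ with $d\bL=d\bPhi\,\mathrm{diag}(\bgam)$. This gives
\[
d\bM=\boldsymbol{\mathcal{R}}\,d\bPhi\,\mathrm{diag}(\bgam)\boldsymbol{\mathcal{R}}.
\]
The identity $\mathrm{vec}(AXB)=(B^{\!\top}\otimes A)\mathrm{vec}X$ then yields $\mathbf{D}_k$ as stated. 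Hence $DF$ is exactly the stacked differenced matrix in \eqref{eq:rankcond}.

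\emph{Step 3: sufficiency.} Assume \eqref{eq:rankcond} holds at the true $\bPhi_0$. Then $F$ is an immersion at $\bPhi_0$, and a $C^1$ map with injective derivative is injective on a neighborhood of that point. One argument is $F(\bPhi)-F(\bPhi')=DF(\bPhi_0)(\bPhi-\bPhi')+o(\|\bPhi-\bPhi'\|)$, together with $\|DF(\bPhi_0)x\|\ge\sigma_{\min}\|x\|$. So $\bPhi$ is locally identified. Then $\mathbf{C}=\mathbf{B}(\bgam^{(1)})+\theta\bM(\bgam^{(1)})$ is recovered from it.

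To connect with the joint parameter $(\mathbf{C},\bPhi)$, I will use the full moment Jacobian. Its block rows are $[\mathbf{I}_{n^2},\,-\theta\mathbf{D}_k]$. Subtracting the first block row from the others gives the block-triangular form
\[
\begin{bmatrix}\mathbf{I}&-\theta\mathbf{D}_1\\ \mathbf{0}&-\theta(\mathbf{D}_k-\mathbf{D}_1)\end{bmatrix},
\]
so its rank equals $n^2+\mathrm{rank}$ of the stacked differenced matrix. This step is routine, but it is what makes the two formulations equivalent.

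\emph{Step 4: necessity for regular identification.} This is the step I expect to be the main obstacle. Suppose the rank is $r<n^2$ and constant on a neighborhood of $\bPhi_0$. By the constant rank theorem, $F^{-1}(F(\bPhi_0))$ is locally a smooth submanifold of dimension $n^2-r>0$ through $\bPhi_0$. For each $\bPhi'$ on it, put $\mathbf{C}'=\mathbf{C}-\theta\bM_1(\bPhi_0)+\theta\bM_1(\bPhi')$. The pair $(\mathbf{C}',\bPhi')$ reproduces every $\mathbf{B}(\bgam^{(k)})$, which gives a continuum of observationally equivalent parameters. The care needed is exactly the constant-rank hypothesis. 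At isolated rank drops the null direction may be obstructed at higher order, for example by analyticity or a cusp, and that is why the statement restricts necessity to regular points.

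\emph{Step 5: the special cases.} With $q=1$, or with all $\bgam^{(k)}$ equal, every $\mathbf{D}_k$ coincides and the stacked matrix is zero. The construction of Step~4 then works with $\mathbf{C}'=\mathbf{C}-\theta\bM(\bPhi_0)+\theta\bM(\bPhi')$ for every $\bPhi'$, giving an $n^2$-dimensional equivalence set. Its tangent directions are $(d\mathbf{C},d\bPhi)=(\theta\,d\bM,\,d\bPhi)$. Take $d\bPhi=s\,\mathbf{e}_i\mathbf{e}_j^{\!\top}$; to first order $\boldsymbol{\mathcal{R}}\approx\mathbf{I}$, so $d\mathbf{C}\approx\theta s\gamma_j\mathbf{e}_i\mathbf{e}_j^{\!\top}$. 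This recovers the lines of Theorem~\ref{thm:ident}(ii).
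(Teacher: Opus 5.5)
Your proposal is correct and follows essentially the same route as the paper. Like the paper, you compute $\mathbf{D}_k$ from $d\bM=\boldsymbol{\mathcal{R}}_k\,d\bPhi\,\mathrm{diag}(\bgam^{(k)})\boldsymbol{\mathcal{R}}_k$ and the vec--Kronecker identity, eliminate $\mathbf{C}$ by subtracting the first block row of the joint Jacobian $[\mathbf{I}_{n^2},-\theta\mathbf{D}_k]$, and use local injectivity under full rank (the paper cites the implicit function theorem), with necessity only at rank-regular points. Your constant-rank fiber construction and the tangent computation that recovers the lines of Theorem~\ref{thm:ident}(ii) just spell out steps the paper calls immediate.
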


\begin{IEEEproof}
Differentiating
$\bM=(\mathbf{I}-\bL)^{-1}-\mathbf{I}$ along $d\bPhi$ with
$\bgam$ fixed at $\bgam^{(k)}$,
\begin{align}
d\bM&=\boldsymbol{\mathcal{R}}_k\,(d\bPhi)\,
\mathrm{diag}(\bgam^{(k)})\,\boldsymbol{\mathcal{R}}_k,
\label{eq:dM}\\
\mathrm{vec}(d\bM)&=\big[(\mathrm{diag}(\bgam^{(k)})
\boldsymbol{\mathcal{R}}_k)^{\!\top}\!\otimes
\boldsymbol{\mathcal{R}}_k\big]\mathrm{vec}(d\bPhi),
\label{eq:vecdM}
\end{align}
by $\mathrm{vec}(\mathbf{A}\mathbf{X}\mathbf{B})
=(\mathbf{B}^{\!\top}\!\otimes\mathbf{A})\,\mathrm{vec}\,
\mathbf{X}$, which gives \eqref{eq:jacobian}. The stacked observation map
$(\mathbf{C},\bPhi)\mapsto
\{\mathbf{B}(\bgam^{(k)})\}_{k=1}^q$ has Jacobian
$[\mathbf{I}_{n^2},-\theta\mathbf{D}_k]_{k=1,\dots,q}$ (row blocks).
Eliminating $\mathbf{C}$ by subtracting the first block row leaves
$-\theta[\mathbf{D}_k-\mathbf{D}_1]_{k\ge2}$ acting on
$\mathrm{vec}\,\bPhi$ alone; the implicit function theorem gives
local injectivity under \eqref{eq:rankcond} (and its necessity at
rank-regular points), and $\mathbf{C}$ is then
recovered from the first block. Necessity of $q\ge2$ and failure
under coinciding regimes are immediate. To first order
$\boldsymbol{\mathcal{R}}_k\approx\mathbf{I}$, so
$\mathbf{D}_k-\mathbf{D}_1\approx
\mathrm{diag}(\bgam^{(k)}-\bgam^{(1)})^{\!\top}\!\otimes\mathbf{I}$,
whose rank is $n\cdot\#\{j:\gamma_j^{(k)}\ne\gamma_j^{(1)}
\text{ for some }k\}$---recovering the per-column variance condition
of Theorem~\ref{thm:ident}(i) as the linearized special case.
\end{IEEEproof}

\begin{remark}[Scale, linearized implementation]
Only the products $\theta\phi_{ij}$ enter \eqref{eq:linmoment}; the
level of $\bPhi$ is identified once $\theta$ is calibrated or
normalized (e.g., own-channel reversal normalization). In the exact resolvent, powers of $\bPhi$ enter with different
orders, so unknown $\theta$ is a global scale ambiguity only to
first order; all spectral statements inherit the calibration of
$\theta$.
\end{remark}

\section{The Estimand Is a Resolvent Sensitivity}\label{sec:estimand}

Beyond first order, the object the interaction estimator
approximates is not $\phi_{ij}$ but (a gain-distribution-weighted
projection of) the sensitivity of the resolvent to the $j$th gain;
with discrete regime variation the coefficient is a secant rather
than a derivative, a distinction that matters for the second-stage
inversion below. Differentiating
$\bM=(\mathbf{I}-\bL)^{-1}-\mathbf{I}$ with
$d\bL=\bPhi\,\mathbf{e}_j\mathbf{e}_j^{\!\top}d\gamma_j$ and using
$d(\mathbf{I}-\bL)^{-1}=(\mathbf{I}-\bL)^{-1}(d\bL)
(\mathbf{I}-\bL)^{-1}$,
\begin{equation}
\frac{\partial \bM}{\partial\gamma_j}
=(\mathbf{I}-\bL)^{-1}\,\bPhi\,\mathbf{e}_j\mathbf{e}_j^{\!\top}\,
(\mathbf{I}-\bL)^{-1},
\label{eq:resolvent}
\end{equation}
which is nonzero for every pair $(i,j)$ connected by a directed path
through the network, not only for direct edges: expanding both
resolvents by \eqref{eq:neumann}, the $(i,j)$ entry of
\eqref{eq:resolvent} collects all walks $j\to\cdots\to i$ that pass
through the perturbed gain. At first order in $\|\bL\|$,
$\partial\bM/\partial\gamma_j\to\bPhi\,
\mathbf{e}_j\mathbf{e}_j^{\!\top}$, recovering
\eqref{eq:linmoment}. This is a feature for the monitoring
problem---transmission alarms and spectral margins are resolvent
objects---and a caveat for topology recovery: a signed detector
applied entrywise flags 2-hop-reachable zero entries at well above
nominal size while holding approximately nominal size on unreachable
entries (Section~\ref{sec:sims}). Direct-edge recovery is therefore
a second-stage inverse problem---inverting \eqref{eq:resolvent} over
the detected reachability pattern---which we formulate but do not
pursue empirically in this paper.

\subsection{Which gains identify which entries}
\label{sec:whichgains}

Theorem~\ref{thm:nonlinear} is an all-or-nothing rank condition. The
following refines it to a per-entry, graph-readable map from gain
variation to identified quantities, and is exact rather than
first-order. Write $\mathbf{m}_{\cdot l}$ for the $l$th column of
$\bM$ and $\mathbf{r}_{l\cdot}$ for the $l$th row of
$\mathbf{I}+\bM$, evaluated at the operating gain $\bgam$.

\begin{theorem}[Cross identification from varying gains]
\label{thm:cross}
Let $\mathcal{V}\subseteq\{1,\dots,n\}$ index the gains that vary
near the operating point $\bgam$, and assume:
(a) $\gamma_l>0$ for every
$l\in\mathcal{V}$ (at $\gamma_l=0$ the undivided sensitivity
$\partial\bM/\partial\gamma_l=(\mathbf{I}-\bL)^{-1}\bPhi\mathbf{e}_l
\mathbf{e}_l^{\!\top}(\mathbf{I}-\bL)^{-1}$ applies instead); and
(b) a \emph{local-support condition}: the support of the gain
process contains a relatively open neighborhood of $\bgam$ in the
coordinates $\bgam_{\mathcal{V}}$ (equivalently, local gain
increments span $\mathbb{R}^{|\mathcal{V}|}$), so that every
coordinate partial derivative $\partial\bM/\partial\gamma_l$,
$l\in\mathcal{V}$, is recoverable from the observable coefficient
map. If the gains move only along a lower-dimensional path---the
empirically relevant case of strongly correlated gain
paths---only the directional derivatives along that path are
observed, and the conclusions below hold for the coordinate
derivatives only once (b) is met. Then:
(i) the exact sensitivity factorizes as the rank-one outer product
\begin{equation}
\frac{\partial\bM}{\partial\gamma_l}
=\frac{1}{\gamma_l}\,\mathbf{m}_{\cdot l}\,\mathbf{r}_{l\cdot},
\qquad l\in\mathcal{V};
\label{eq:rankone}
\end{equation}
(ii) provided the nondegeneracy condition
$\mathbf{m}_{\cdot l}\ne\mathbf{0}$ holds (node $l$ has at least one
outgoing feedback path; otherwise the sensitivity vanishes
identically and only that fact is learned), the $(l,l)$ entry pins
the factorization scale: with
$p_{ll}=[\partial\bM/\partial\gamma_l]_{ll}$, the diagonal entry
solves $x(1+x)=\gamma_l\,p_{ll}$, whose nonnegative root is
$M_{ll}$---and when $M_{ll}=0$ the scale is pinned instead by the
row normalization $r_{ll}=1$: the $l$th column of the derivative
slab equals $\mathbf{m}_{\cdot l}/\gamma_l$ directly, and
$\mathbf{r}_{l\cdot}$ follows from any nonzero entry of
$\mathbf{m}_{\cdot l}$; hence \emph{the full column $\mathbf{m}_{\cdot l}$ and the
full row $\mathbf{r}_{l\cdot}$ are exactly identified for every
$l\in\mathcal{V}$ with an outgoing path}, so that a
\emph{guaranteed point-identified subset} of $\bM$ is the cross
pattern $\{(i,j): i\in\mathcal{V}'\text{ or }j\in\mathcal{V}'\}$,
$\mathcal{V}'=\{l\in\mathcal{V}:\mathbf{m}_{\cdot l}\ne\mathbf{0}\}$;
(iii) in the linearized regime the statement reduces to
Theorem~\ref{thm:ident}: $\partial\bM/\partial\gamma_l\to
\bPhi\mathbf{e}_l\mathbf{e}_l^{\!\top}$, only column $l$ of $\bPhi$
is exposed, and entries with $i,j\notin\mathcal{V}$ are unidentified
at first order. Whether the complement block is identified in the
exact model through higher-order joint variation is not settled by
this theorem: (i)--(ii) exhibit a point-identified subset, not the
full identified set.
\end{theorem}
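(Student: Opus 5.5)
Starting from the resolvent-derivative identity \eqref{eq:resolvent} and rewriting it in terms of $\bM$ with the identities \eqref{eq:identities} gives part (i) directly. For $\gamma_l>0$ we have $\bPhi\,\mathbf{e}_l=\bL\,\mathbf{e}_l/\gamma_l$, because $\bL\mathbf{e}_l=\bPhi\,\mathrm{diag}(\bgam)\mathbf{e}_l=\gamma_l\bPhi\mathbf{e}_l$. Then
\[
\frac{\partial\bM}{\partial\gamma_l}
=\frac{1}{\gamma_l}\,(\mathbf{I}-\bL)^{-1}\bL\,\mathbf{e}_l\,\mathbf{e}_l^{\!\top}(\mathbf{I}-\bL)^{-1}
=\frac{1}{\gamma_l}\,\bM\mathbf{e}_l\,\mathbf{e}_l^{\!\top}(\mathbf{I}+\bM),
\]
and $\bM\mathbf{e}_l=\mathbf{m}_{\cdot l}$, $\mathbf{e}_l^{\!\top}(\mathbf{I}+\bM)=\mathbf{r}_{l\cdot}$, which is \eqref{eq:rankone}. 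Hypothesis (b) is what lets us treat the left side as observable. The map $\bgam\mapsto\mathbf{B}(\bgam)=\mathbf{C}-\theta\bM(\bgam)$ is observed on a relatively open set in the $\bgam_{\mathcal{V}}$ coordinates, and $\mathbf{C}$ is constant (Assumption~\ref{as:exo}). Its coordinate partials therefore equal $-\theta\,\partial\bM/\partial\gamma_l$ and are identified for each $l\in\mathcal{V}$ once $\theta$ is known. I would state explicitly that $\bM$ is smooth (indeed analytic) in $\bgam$ on $\{\rho(\bL)<1\}$, so these partials exist.

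For part (ii), fix $l\in\mathcal{V}'$ and write $\mathbf{P}_l=\partial\bM/\partial\gamma_l$, which is observed. Reading entry $(l,l)$ of \eqref{eq:rankone} gives $\gamma_l p_{ll}=M_{ll}(1+M_{ll})$, since $r_{ll}=1+M_{ll}$. The Neumann series \eqref{eq:neumann} with $\bPhi\ge0$ and $\bgam\ge0$ gives $M_{ll}\ge0$. The roots of $x(1+x)=c$ have product $-c\le0$, so when $c>0$ they have opposite signs, and when $c=0$ they are $0$ and $-1$. In either case the nonnegative root is unique and equals $M_{ll}$.

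If $M_{ll}>0$, then $r_{ll}=1+M_{ll}$ is known and nonzero. The $l$th column of $\gamma_l\mathbf{P}_l$ is $\mathbf{m}_{\cdot l}\,r_{ll}$, which yields $\mathbf{m}_{\cdot l}$. Pick any $i$ with $m_{il}\ne0$; the $i$th row of $\gamma_l\mathbf{P}_l$ is $m_{il}\mathbf{r}_{l\cdot}$, which yields $\mathbf{r}_{l\cdot}$. If $M_{ll}=0$, then $r_{ll}=1$ and the same two steps apply verbatim. The nondegeneracy $\mathbf{m}_{\cdot l}\ne\mathbf{0}$ guarantees that such an $i$ exists. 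If instead $\mathbf{m}_{\cdot l}=\mathbf{0}$, then $\mathbf{P}_l\equiv\mathbf{0}$ and only that fact is learned. Knowing the row $\mathbf{r}_{l\cdot}$ gives row $l$ of $\bM$ by subtracting $\mathbf{e}_l^{\!\top}$. So every entry $(i,j)$ with $i\in\mathcal{V}'$ or $j\in\mathcal{V}'$ is pinned, which is the claimed cross pattern.

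Part (iii) follows by letting $\|\bL\|\to0$. Then $(\mathbf{I}-\bL)^{-1}\to\mathbf{I}$, so $\mathbf{P}_l\to\bPhi\mathbf{e}_l\mathbf{e}_l^{\!\top}$, which exposes only column $l$ of $\bPhi$. This matches Theorem~\ref{thm:ident}(i)--(ii), and the observational-equivalence line construction there shows that entries in columns $j\notin\mathcal{V}$ stay unidentified to first order.

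I expect the main obstacle to be justifying hypothesis (b) as the route from the observable coefficient map to the coordinate partials. Only the composite map $\bgam\mapsto\mathbf{B}(\bgam)$ on the support of the gain process is observed. Recovering coordinate derivatives needs that support to contain an open set in $\bgam_{\mathcal{V}}$ with the other coordinates held fixed, or at least local increments spanning $\mathbb{R}^{|\mathcal{V}|}$, so that the directional derivatives can be inverted linearly. I would argue this through the chain rule: the differential $d\mathbf{B}=-\theta\sum_{l}\mathbf{P}_l\,d\gamma_l$ is linear in $d\bgam_{\mathcal{V}}$, so observing it along a spanning set of increments recovers each $\mathbf{P}_l$. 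I would also note that if only a lower-dimensional path is available, only the corresponding linear combinations of the $\mathbf{P}_l$ are identified.
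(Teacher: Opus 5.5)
Your proposal is correct and follows the paper's proof essentially step for step. Part (i) rewrites \eqref{eq:resolvent} via $\bPhi\mathbf{e}_l=\bL\mathbf{e}_l/\gamma_l$ and \eqref{eq:identities}; part (ii) reads the $(l,l)$ entry $M_{ll}(1+M_{ll})/\gamma_l$, takes the unique nonnegative root, and divides the observed column and a nonzero row by the now-known factors; part (iii) takes the limit $\bM\to\bL$, $\mathbf{I}+\bM\to\mathbf{I}$. Your chain-rule account of how hypothesis (b) delivers the coordinate partials, your root-product uniqueness argument (the paper instead uses monotonicity of $x\mapsto x(1+x)$ on $x\ge0$), and your remark that $M_{ll}=0$ needs no separate case because $r_{ll}=1+M_{ll}\ge1$ are harmless elaborations of the same argument.
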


\begin{IEEEproof}
(i) is \eqref{eq:resolvent} rewritten:
$(\mathbf{I}-\bL)^{-1}\bPhi\mathbf{e}_l
=(\mathbf{I}-\bL)^{-1}\bL\mathbf{e}_l/\gamma_l
=\mathbf{m}_{\cdot l}/\gamma_l$ and
$\mathbf{e}_l^{\!\top}(\mathbf{I}-\bL)^{-1}=\mathbf{r}_{l\cdot}$.
(ii) The $(l,l)$ entry of \eqref{eq:rankone} is
$M_{ll}(1+M_{ll})/\gamma_l$; the map $x\mapsto x(1+x)$ is strictly
increasing on $x\ge0$, so the root is unique under the maintained
nonnegativity, and dividing the observed rank-one factors by the
resulting scale recovers column and row exactly. (iii) follows by
$\bM\to\bL$, $(\mathbf{I}+\bM)\to\mathbf{I}$. All three statements
are verified numerically in the replication code.
\end{IEEEproof}

The graph reading: \emph{a varying gain at a node with outgoing
feedback illuminates every transmission path into it and out of
it}---row and column $l$ of the resolvent---and pins its own
diagonal; a varying gain at a sink node (no outgoing path) reveals
only that fact, and gains that never move leave the
$\mathcal{V}^c\times\mathcal{V}^c$ block of $\bM$ dark at first
order. Two consequences follow. For \emph{monitoring},
partial gain variation is enough for partial surveillance under the
local-support condition (b): all
transmission entries $M_{ij}$ with a varying sender or receiver are
point-identified without any support assumption on $\bPhi$, which is exactly
the set of alarms a supervisor of the varying complexes needs. For
\emph{estimation}, \eqref{eq:rankone} implies that the derivative
slab associated with gain $l$ is rank one in population---so a
rank-one (SVD) projection of the estimated full interaction slab
(all $r_{1,k}\tilde\gamma_l$ regressors) is a noise-reduced
estimator of $(\mathbf{m}_{\cdot l},\mathbf{r}_{l\cdot})$ that
bypasses the fixed-point inversion \eqref{eq:twostage}
entirely. By Eckart--Young the rank-one \emph{projection
subproblem} is globally optimal---so that step, unlike the nonlinear
least-squares fit, has no local-minimum pathology; no such claim is
made for the full estimator after scale recovery, assembly, and
fallback. We implement a columns-only version (rows are redundant
when all columns are available and are usable as a diagnostic) and
find that, among root-success paths in the design of
Section~\ref{sec:sims}, it attains lower spectral bias and far
lower dispersion than the selected two-stage inversion; numbers,
Monte Carlo accounting, and failure handling are reported in
Section~\ref{sec:uq}. The implementation uses global standardized
slopes, so the secant caveat of Section~\ref{sec:algo} applies to
its estimand; its distribution theory alongside
Theorem~\ref{thm:bootstrap} is a natural next step. Whether the dark block is identified through
higher-order joint variation of $\bgam_{\mathcal{V}}$ in the exact
model is an open question we state precisely: all
$\bgam_{\mathcal{V}}$-derivatives of $\mathbf{B}$ expose products of
cross-pattern entries only, so any identification of the dark block
must come through the nonlinear dependence of those cross entries on
the full $\bPhi$.

\section{Estimation, Algorithms, and Monitoring}\label{sec:algo}

\subsection{Interaction estimator}

Algorithm~\ref{alg:ident} implements the moment
\eqref{eq:linmoment}. For receiver $i$, the regression
\begin{equation}
r_{\mathrm{on},i,t}=\alpha_i+\sum_{j=1}^{n}
\big(b_{ij}\,r_{1,j,t}+d_{ij}\,r_{1,j,t}\,\tilde\gamma_{j,t}\big)
+\varepsilon_{i,t}
\label{eq:regression}
\end{equation}
separates the constant confound (absorbed by $b_{ij}$ together with
the gain-mean response) from the gain-tracking response
$d_{ij}=-\theta\,\phi_{ij}\,s_{\gamma_j}+O(\|\bL\|^2)$, where
$s_{\gamma_j}$ is the standardization scale of
$\tilde\gamma_{j,t}=(\gamma_{j,t}-\bar\gamma_j)/s_{\gamma_j}$.
Heteroskedasticity-robust variances give entrywise $t$-statistics;
the signed decision rule $t_{ij}<-z_{1-\alpha}$ exploits the known
direction of the reversal (coupling makes the interaction
\emph{negative}), which a confound of either sign cannot mimic
without tracking $\bgam_t$ (Assumption~\ref{as:exo}).

Two scope qualifications attach to \eqref{eq:regression}. First, it
is the \emph{low-gain} implementation of the identification theory:
it matches Theorem~\ref{thm:ident} exactly and
Theorem~\ref{thm:nonlinear} to first order; an exact-fixed-point
nonlinear least-squares estimator fitting
$\mathbf{B}(\bgam)=\mathbf{C}-\theta[(\mathbf{I}-\bPhi\,
\mathrm{diag}(\bgam))^{-1}-\mathbf{I}]$ across regimes is the
natural second stage, with the linear estimates as initialization.
Second, \eqref{eq:regression} includes the same-index interactions
$r_{1,j,t}\tilde\gamma_{j,t}$ only, while beyond first order the
expansion of $(\mathbf{C}-\theta\bM(\bgam))\mathbf{r}_1$ generates
all $n^2$ cross terms $r_{1,k,t}\tilde\gamma_{j,t}$ with
coefficients $-\theta\,\partial M_{ik}/\partial\gamma_j$; with
correlated regressors the same-index OLS coefficient is then a
projection mixture, so Algorithm~\ref{alg:ident} estimates the
first-order direct-coupling approximation with an
$O(\|\bL\|^2)$ projection bias, not the exact sensitivity. The full
$n^2$-interaction regression is feasible at moderate $n$
(Section~\ref{sec:sims} reports it: it reduces the reachable-zero
flag rate from $0.39$ to $0.27$ at unchanged size on unreachable
zeros) and is the cleaner implementation when $T$ permits.

\begin{algorithm}[t]
\caption{Output-only network identification from known gains}
\label{alg:ident}
\begin{algorithmic}[1]
\REQUIRE outputs $\{\mathbf{r}_{1,t},\mathbf{r}_{\mathrm{on},t}\}_{t=1}^T$,
gains $\{\bgam_t\}$, reversal share $\theta$, size $\alpha$
\STATE \textbf{if} any $s_{\gamma_j}$ is zero or negligible
\textbf{then} declare column $j$ unidentified
(Theorem~\ref{thm:ident}(ii)) and drop it \textbf{end if}
\STATE standardize remaining gain paths $\tilde\gamma_{j,t}\gets
(\gamma_{j,t}-\bar\gamma_j)/s_{\gamma_j}$; form
$\mathbf{G}=T^{-1}\tilde{\boldsymbol{\Gamma}}^{\!\top}
\tilde{\boldsymbol{\Gamma}}$
\STATE report $\lambda_{\min}(\hat{\mathbf{Q}})$ of the residualized
interaction Gram matrix as the identification-strength diagnostic
(abstain if below a preset floor), with
$\mathrm{cond}(\mathbf{G})$ as the secondary design diagnostic
(Theorem~\ref{thm:ident}(iii))
\FOR{$i=1,\dots,n$}
\STATE OLS of \eqref{eq:regression}; collect $d_{ij}$ with HC
errors for independent designs, HAC or moving-block-bootstrap
errors for dependent data; form $t_{ij}$
\ENDFOR
\STATE $\hat\phi^{(1)}_{ij}\gets -d_{ij}/(\theta\, s_{\gamma_j})$
(first-order/low-gain estimate; exact recovery is the second-stage
inversion of Sec.~\ref{sec:uq}); flag couplings with
$t_{ij}<-z_{1-\alpha}$ (marginal, exploratory tests---see
multiplicity note below)
\STATE assemble $\hat\bL_t=\hat\bPhi^{(1)}\mathrm{diag}(\bgam_t)$
with nonnegative projection $\hat\bL_t\gets\max(\hat\bL_t,0)$;
$\hat\bM_t=(\mathbf{I}-\hat\bL_t)^{-1}-\mathbf{I}$
\RETURN $\hat\bPhi^{(1)}$, flags, screening traces
$\{\rho(\hat\bL_t)\}$, $\{\hat M_{ij,t}\}$
\end{algorithmic}
\end{algorithm}

\subsection{Estimation theory for the first-order estimator}
\label{sec:esttheory}

Theorems~\ref{thm:ident}--\ref{thm:nonlinear} are population
statements; the following closes the gap to
Algorithm~\ref{alg:ident}. Treat the gain path as a deterministic
design (Assumption~\ref{as:exo}) with nondegenerate limits: the
standardized design satisfies
$T^{-1}\tilde{\mathbf{Z}}^{\!\top}\mathbf{M}_X\tilde{\mathbf{Z}}
\to\mathbf{Q}\succ0$ with $\tilde{\mathbf{Z}}$ the interaction block
and $\mathbf{M}_X$ the main-effect residualizer.

\begin{theorem}[Consistency and asymptotic normality]
\label{thm:estimation}
Let Assumptions~\ref{as:exo}--\ref{as:stoch} hold, strengthened so
that $\{(\mathbf{r}_{1,t},\mathbf{v}_t,\mathbf{w}_t)\}$ is
$\alpha$-mixing with coefficients satisfying
$\sum_k\alpha(k)^{\delta/(2+\delta)}<\infty$ and
$\mathbb{E}\|\cdot\|^{4+2\delta}<\infty$ for some $\delta>0$. Assume further a bounded deterministic design whose lagged
design-product averages converge (so that long-run limits exist for
the triangular score array), a uniform stability margin
$\sup_t\rho(\bL_t)\le\bar\rho<1$. The CLT below is stated for the
score array centered at its \emph{date-specific means},
$\mathbf{s}_{t,T}-\mathbb{E}[\mathbf{s}_{t,T}]$---a
triangular-array formulation, because under the misspecified pooled
projection the omitted cross-interactions generally give the scores
nonzero, date-dependent deterministic means---with the pseudo-true
coefficient defined by the \emph{aggregate} normal equations,
$T^{-1}\sum_t\mathbb{E}[\mathbf{s}_{t,T}]\to\mathbf{0}$; the HAC
estimator is applied to the demeaned scores, and its bandwidth
conditions are assumed to hold for the centered array. Let
$\hat{\mathbf{d}}_i$ collect the interaction coefficients for
receiver $i$ and $\mathbf{d}^{*}_i$ their pseudo-true values, which
satisfy $d^{*}_{ij}=-\theta\,\phi_{ij}\,s_{\gamma_j}
+O(\|\bL\|^2)$. Then, by the Frisch--Waugh--Lovell reduction to the
residualized interaction block,
\begin{equation}
\sqrt{T}\,(\hat{\mathbf{d}}_i-\mathbf{d}^{*}_i)
\;\xrightarrow{d}\;
\mathcal{N}\!\big(\mathbf{0},\,
\mathbf{Q}^{-1}\boldsymbol{\Omega}_i\mathbf{Q}^{-1}\big),
\label{eq:clt}
\end{equation}
where $\mathbf{Q}$ is the residualized interaction information
matrix defined above and $\boldsymbol{\Omega}_i$ is the long-run
covariance of the residualized interaction score
$\tilde{\mathbf{z}}_t\varepsilon_{i,t}$, consistently estimable by
HAC smoothing with standard bandwidth conditions \cite{newey1987}
or a design-preserving block bootstrap (see
Theorem~\ref{thm:bootstrap}).
Consistency is for the pseudo-true parameter; its distance to the
structural first-order target is the projection bias of
Section~\ref{sec:estimand}, of order
$\sup_t\|\bPhi\,\mathrm{diag}(\bgam_t)\|^{2}$ with a constant that
grows as $\lambda_{\min}(\mathbf{Q})\to0$---a modeling error, not a
statistical one, and it does not shrink with $T$.
\end{theorem}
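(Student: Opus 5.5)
The plan is the standard M-estimation route for a linear projection with a deterministic triangular design, organised around the Frisch--Waugh--Lovell reduction named in the statement.

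\emph{Step 1: reduction to the interaction block.} Fix receiver $i$ and write the regression \eqref{eq:regression} as $\mathbf{y}_i=\mathbf{X}\mathbf{a}_i+\tilde{\mathbf{Z}}\mathbf{d}_i+\boldsymbol{\varepsilon}_i$. By Frisch--Waugh--Lovell,
\[
\hat{\mathbf{d}}_i=(\tilde{\mathbf{Z}}^{\!\top}\mathbf{M}_X\tilde{\mathbf{Z}})^{-1}\tilde{\mathbf{Z}}^{\!\top}\mathbf{M}_X\mathbf{y}_i .
\]
Define the pseudo-true pair $(\mathbf{a}^*_i,\mathbf{d}^*_i)$ through the aggregate normal equations, so that the averaged score means vanish. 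Then
\[
\sqrt{T}(\hat{\mathbf{d}}_i-\mathbf{d}^*_i)=\big(T^{-1}\tilde{\mathbf{Z}}^{\!\top}\mathbf{M}_X\tilde{\mathbf{Z}}\big)^{-1}\,T^{-1/2}\textstyle\sum_t \tilde{\mathbf{z}}_t\varepsilon^*_{i,t},
\]
where $\varepsilon^*$ is the pseudo-true residual and $\tilde{\mathbf{z}}_t$ is the residualized interaction. Two things must be checked here. The sample residualizer $\mathbf{M}_X$ is built from stochastic main effects, so replacing it by its population counterpart must cost only $o_p(1)$. This follows from $\sqrt{T}$-consistency of the sample projection coefficients of $\tilde{\mathbf{Z}}$ on $\mathbf{X}$, together with the population orthogonality shown in the proof of Theorem~\ref{thm:ident}(iii). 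The Hessian converges to $\mathbf{Q}\succ0$ by the assumed design limit and a mixing LLN for the products $r_{1,j,t}r_{1,k,t}\tilde\gamma_{j,t}\tilde\gamma_{k,t}$. That LLN applies because the gains are bounded and deterministic and the signals are ergodic, mixing and have $4+2\delta$ moments. Continuous mapping then gives the inverse.

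\emph{Step 2: CLT for the score array.} Write the score as $\mathbf{s}_{t,T}=\tilde{\mathbf{z}}_t\varepsilon^*_{i,t}$. By \eqref{eq:ondecomp}, $\varepsilon^*_{i,t}$ is a linear combination of $\mathbf{r}_{1,t}$, $\mathbf{v}_t$ and $\mathbf{w}_t$ whose coefficients depend on $t$ only through $\bgam_t$. The coefficients are bounded because $\sup_t\rho(\bL_t)\le\bar\rho$ gives $\sup_t\|(\mathbf{I}-\bL_t)^{-1}\|<\infty$. The score is therefore a bounded deterministic transform of a mixing sequence, hence mixing with the same coefficients. Its $2+\delta$ moments are bounded by Cauchy--Schwarz from the $4+2\delta$ moments. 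Apply a triangular-array CLT for $\alpha$-mixing arrays, in the Herrndorf or de Jong form, to $T^{-1/2}\sum_t(\mathbf{s}_{t,T}-\mathbb{E}\mathbf{s}_{t,T})$. The convergent lagged design averages supply the limit $\boldsymbol{\Omega}_i$, and the mixing-rate summability controls the covariance tails.

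Recentring then costs nothing. The aggregate normal equations make $T^{-1/2}\sum_t\mathbb{E}\mathbf{s}_{t,T}$ vanish exactly in the sample-design sense; this is the main obstacle below. Combining Steps 1 and 2 with Slutsky's theorem gives \eqref{eq:clt}. Consistency of HAC applied to the demeaned scores is cited from \cite{newey1987} under the assumed bandwidth conditions.

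\emph{Step 3: bias statement.} For the relation $d^*_{ij}=-\theta\phi_{ij}s_{\gamma_j}+O(\|\bL\|^2)$, expand $\bM_t=\bL_t+\bL_t^2(\mathbf{I}-\bL_t)^{-1}$. The true conditional mean of $r_{\mathrm{on},i,t}$ then equals the model \eqref{eq:regression} with coefficients $-\theta\phi_{ij}s_{\gamma_j}$, plus a remainder linear in $\mathbf{r}_{1,t}$. That remainder is bounded by $\sup_t\|\bL_t\|^2/(1-\bar\rho)$ in operator norm. The projection coefficient of the remainder is $\mathbf{Q}^{-1}$ times a bounded cross moment, which gives the stated order with a constant growing as $\lambda_{\min}(\mathbf{Q})\to0$.

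The main obstacle is Step 2's centering. The date means $\mathbb{E}\mathbf{s}_{t,T}$ are nonzero, and the pseudo-true value must make their sum $o(\sqrt{T})$, not merely $o(T)$. My first attempt is to define $\mathbf{d}^*_i$ via the finite-$T$ design average, so that the sum is exactly zero. I then show that this $T$-indexed target converges, using the convergent design-product averages and the fact that $\mathbf{Q}\succ0$.
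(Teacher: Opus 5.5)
Your proposal is correct and follows the paper's route: a Frisch--Waugh--Lovell reduction to the residualized interaction block, then a CLT for the date-mean-centered $\alpha$-mixing score array (you cite Herrndorf/de Jong where the paper cites White's Thm.~5.20), then Newey--West for the HAC step, and finally the projection remainder of the omitted cross-interactions for the $O(\|\bL\|^2)$ pseudo-true gap. You are more careful than the paper's sketch on the centering: the condition $T^{-1}\sum_t\mathbb{E}[\mathbf{s}_{t,T}]\to\mathbf{0}$ controls $\sum_t\mathbb{E}[\mathbf{s}_{t,T}]$ only to $o(T)$ rather than the needed $o(\sqrt{T})$, and your finite-$T$ pseudo-true value is what makes the recentring exact. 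One small slip is that the bound $\|\bL_t^2(\mathbf{I}-\bL_t)^{-1}\|\le\sup_t\|\bL_t\|^2/(1-\bar\rho)$ does not follow from a spectral-radius margin alone. Use $\sup_t\|(\mathbf{I}-\bL_t)^{-1}\|$ instead; it is finite because the gains are bounded and the resolvent is continuous on the compact set where $\rho\le\bar\rho$.
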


\begin{IEEEproof}[Proof sketch]
Multiplying stationary mixing signals by a bounded deterministic
gain design yields a triangular \emph{mixing array} of scores, not
a stationary sequence; the assumed lag-product design limits give
existence of the long-run covariance, centering at the
date-specific means makes the array mean zero by construction
(while the aggregate normal equations pin the pseudo-true
coefficient), and the CLT for mixing arrays
\cite[Thm.~5.20]{white2001} applies to the centered residualized
interaction score. Frisch--Waugh--Lovell reduces the full regression to the
interaction block with information matrix $\mathbf{Q}$, yielding
\eqref{eq:clt} with the sandwich form; HAC consistency under the
same mixing, moment, and bandwidth conditions is \cite{newey1987}. The pseudo-true gap
statement is the entrywise expansion \eqref{eq:linmoment} evaluated
under the projection, whose remainder collects the omitted
cross-interaction terms of Section~\ref{sec:algo}.
\end{IEEEproof}

Two design corollaries matter in practice, and both are verified
numerically. First, under a \emph{fixed} gain design the
$\sqrt{T}$ rate is visible end to end: tripling $T$ over the sequence $T\in\{250,750,2250,6750\}$ ($120$
replications each) shrinks the RMSE to the pseudo-true parameter by
factors $1.75/1.74/1.68$ against the theoretical
$\sqrt{3}\approx1.73$. Second, under \emph{random} gain designs the
pseudo-true parameter is itself design-dependent, so unconditional
convergence stalls at the design variability floor---the practical
reading is that inference should condition on the realized gain
path, exactly as the fixed-design frame of
Theorem~\ref{thm:estimation} does.

\subsection{Monitoring statistics and dynamic threshold}

Three alarms follow from the assembled matrices, one per risk object.
\emph{Cycle margin}: $\rho(\hat\bL_t)$ against the static boundary
$1$ and the dynamic boundary below. \emph{Transmission}: off-diagonal
resolvent entries $\hat M_{ij,t}$, the displacement imported by
channel $i$ per unit of channel $j$'s innovation. \emph{Persistence}:
chaining periods through the partial correction, the uncorrected
displacement $\mathbf{e}_t$ evolves as
\begin{equation}
\mathbf{e}_{t+1}
=\big[(1-\theta)\mathbf{I}-\theta\bM\big]\mathbf{e}_t
+\bM\,\boldsymbol{\varepsilon}_{t+1},
\label{eq:chain}
\end{equation}
whose derivation is: a share $\theta$ of the carried displacement is
corrected (leaving $(1-\theta)\mathbf{e}_t$), and the new clearing
window adds $\bM$ applied to the period innovation net of the
correction flow, $\boldsymbol{\varepsilon}_{t+1}-\theta\mathbf{e}_t$.
For $\lambda\in\sigma(\bL)$ the matching chain eigenvalue is
\begin{equation}
1-\theta-\theta\,\frac{\lambda}{1-\lambda}
=1-\frac{\theta}{1-\lambda},
\label{eq:chaineig}
\end{equation}
so with a real spectrum and $\rho(\bL)<1$ the chain is covariance
stationary iff $|1-\theta/(1-\lambda)|<1$ for all $\lambda$; for
$\lambda\in[0,1)$ this is $\lambda<1-\theta/2$, negative eigenvalues
satisfy it automatically for $\theta\in(0,1]$, and the Perron root
binds. The operative stability boundary for monitoring is therefore
$\rho(\bL)<1-\theta/2$, strictly tighter than the within-period
boundary---at $\theta=0.88$ (the case-study calibration), $0.56$
versus $1$. Three restrictions delimit this statement: it freezes
$\bL$ over the horizon (for genuinely time-varying gains, stability
is governed by products of transition matrices, for which pointwise
spectral conditions are neither necessary nor sufficient in
general); it assumes a common scalar correction $\theta$; and the
reduction to $1-\theta/2$ uses a real spectrum. It is a monitoring
heuristic for slowly varying $\bL_t$, not a general stability
theorem.

\begin{remark}[Network-analytic reading]
The three monitoring objects have close counterparts in network
analysis. The resolvent $(\mathbf{I}-\bL)^{-1}=\sum_k\bL^k$ is the
walk-generating kernel of Katz's status index \cite{katz1953}, so
the transmission entry $M_{ij}$ is a Katz-type walk influence of
channel $j$ on channel $i$ (a communicability entry; Katz's index
proper aggregates these entries per node). The left and
right Perron vectors $\mathbf{u},\mathbf{v}$ that already enter the
delta method \eqref{eq:deltamethod} are the corresponding
influence and susceptibility centralities, and for a simple Perron
root with $\gamma_j>0$ the correspondence is exact:
$\partial\rho(\bPhi\,\mathrm{diag}\,\bgam)/\partial\gamma_j
=\rho\,u_jv_j/(\gamma_j\,\mathbf{u}^{\!\top}\mathbf{v})$ under the
normalization $\mathbf{u}^{\!\top}\mathbf{v}>0$, so the
product $u_jv_j$ ranks channels by the elasticity of the
spectral margin to their gains, giving the
monitor a principled answer to \emph{which} node drives an
approaching alarm. And the dynamic boundary
$\rho(\bL)<1-\theta/2$ is a limited algebraic analogy to an
epidemic
threshold in network spread models---spectral radius of the contact
structure against a recovery rate, here the overnight correction
share $\theta$---valid only under the frozen-$\bL$, common
scalar-$\theta$, real-spectrum assumptions of the dynamic
threshold, not an epidemic-threshold identity. Beyond
interpretation, this correspondence points
at the natural regularization class for moderate and large $n$:
network-analytic structure on $\bPhi$ (low-rank latent positions,
block/community structure) reduces the parameter count from $n^2$
and correspondingly weakens the gain-richness demanded by the rank
condition of Theorem~\ref{thm:nonlinear}---a quantitative
structure-for-data exchange we leave to future work.
\end{remark} Algorithm~\ref{alg:monitor} packages the three alarms
with a consecutive-exceedance rule.

\begin{algorithm}[t]
\caption{Rolling spectral and transmission monitoring}
\label{alg:monitor}
\begin{algorithmic}[1]
\REQUIRE data $\{\mathbf{r}_{1,t},\mathbf{r}_{\mathrm{on},t},
\bgam_t\}$, calibration $\theta$, window $W$, thresholds $\bar\rho$
(default $1-\theta/2$; heuristic when its frozen-$\bL$/scalar-$\theta$/
real-spectrum conditions fail), $\bar m$, exceedance count
$K_{\mathrm{run}}$, size $\alpha$, bootstrap block length $b$
\FOR{each period $t\ge W$}
\STATE run Algorithm~\ref{alg:ident} on window $t-W+1,\dots,t$
\STATE \emph{(spectral branch:)} $\mathcal{A}_\rho(t)\gets
\mathbb{I}\{\rho(\hat\bL_t^{(2)})>\bar\rho_t^{*}\}$, where
$\hat\bL_t^{(2)}=\hat\bPhi^{(2)}\mathrm{diag}(\bgam_t)$ uses the
inversion \eqref{eq:twostage}---the estimator whose bootstrap
calibration Theorem~\ref{thm:bootstrap} covers; the experimental
SVD-slab point estimator of Sec.~\ref{sec:uq} may be reported
alongside but carries no inferential claim---on the support selected
by the \emph{model-selection threshold} $c_T$ of
Theorem~\ref{thm:bootstrap} (not the fixed-$\alpha$ screen of
line~7), and $\bar\rho_t^{*}$ is a design-preserving
block-bootstrap basic-interval critical value for
$H_0:\rho^{(2)*}\le\bar\rho$---inference on the pseudo-true
functional $\rho^{(2)*}$, whose distance to the structural $\rho$
is the secant/projection gap of Sec.~\ref{sec:uq} ($90\%$ empirical
coverage there; pointwise validity in Theorem~\ref{thm:bootstrap},
uniform validity open);\quad
$\mathcal{A}_m(t)\gets
\mathbb{I}\{\exists\,i\ne j:\hat M_{ij,t}>\bar m\ \text{and}\
 t_{ij}<-z_{1-\alpha}\}$
\STATE raise alarm if $\mathcal{A}_\rho$ or $\mathcal{A}_m$ persists
for $K_{\mathrm{run}}$ consecutive windows
\ENDFOR
\end{algorithmic}
\end{algorithm}

Entrywise screening over $n^2$ edges and overlapping windows
requires familywise or false-discovery control, and the
consecutive-exceedance count $K_{\mathrm{run}}$ trades average run length
against detection delay; calibrating both is part of the sequential
design, not an afterthought, and is included in the UQ program
below.

\subsection{Uncertainty quantification for spectral alarms}
\label{sec:uq}

The alarms of Algorithm~\ref{alg:monitor} are nonlinear functionals
of noisy regression coefficients, and a point estimate of
$\rho(\hat\bL)$ without an error bar is not an implementable
statistic. Three layers are involved.

\emph{(a) From coefficients to entries.} The interaction estimates
$\hat d_{ij}$ carry an HC covariance from \eqref{eq:regression};
entry estimates inherit it through the linear map of
Algorithm~\ref{alg:ident}.

\emph{(b) From entries to $\rho$.} For an irreducible nonnegative
matrix the Perron root is a simple real eigenvalue; with left and
right Perron vectors $\mathbf{u},\mathbf{v}$, first-order
eigenvalue perturbation gives
\begin{equation}
\frac{\partial\rho}{\partial \ell_{ij}}
=\frac{u_i\,v_j}{\mathbf{u}^{\!\top}\mathbf{v}},
\qquad
\mathrm{se}^2(\hat\rho)\approx
\sum_{i,j}\Big(\frac{u_i v_j}{\mathbf{u}^{\!\top}\mathbf{v}}\Big)^{\!2}
\mathrm{Var}(\hat\ell_{ij}),
\label{eq:deltamethod}
\end{equation}
with the obvious quadratic form under correlated entry errors.

\emph{(c) The plug-in trap, quantified.} Because the first-stage
estimand is the resolvent sensitivity \eqref{eq:resolvent}, entrywise
estimates assembled directly into a matrix are \emph{network-biased}:
indirect paths inflate entries, and nonnegative projection converts
zero-entry noise into positive mass that accumulates in the spectral
radius. In the design of Section~\ref{sec:sims} (true
$\rho=0.30$), the naive plug-in gives
$\mathbb{E}[\hat\rho]=0.73$; sparsifying by the signed $t$-rule does
not repair it ($0.72$), because the bias sits in the \emph{detected}
entries. As an \emph{ad hoc bias correction}---the global
interaction slope is a secant across gain levels rather than a
derivative at a specified operating point, so this inversion is
heuristic rather than an exact second stage---the fixed-point iteration
\begin{equation}
\hat\bPhi^{(2)}\leftarrow(\mathbf{I}-\hat\bL)\,\hat{\mathbf{S}}\,
\mathrm{diag}(\hat h)^{-1},\quad
\hat h_j=[(\mathbf{I}-\hat\bL)^{-1}]_{jj},
\label{eq:twostage}
\end{equation}
evaluated at the mean gain---formally, $\hat\bPhi^{(2)}$ solves the
estimating equation
$\bPhi=(\mathbf{I}-\bPhi\,\mathrm{diag}(\bar\bgam))\,
\hat{\mathbf{S}}\,\mathrm{diag}(\hat h(\bPhi))^{-1}$ on the selected
support, with the nonnegativity projection assumed inactive at the
solution (strict interiority)---where $\hat{\mathbf{S}}$ collects
the sensitivity estimates; it recovers entries well
(mean absolute error $0.042$ on $\bPhi$) and halves the spectral
bias ($\mathbb{E}[\hat\rho]=0.47$); restricting to the $t$-detected
support gives $0.42$, and even the oracle support retains $+0.07$
bias with delta-method coverage of only $61\%$ (anti-conservative),
because \eqref{eq:deltamethod} ignores the support-selection step
and the nonneg projection. The rank-one slab estimator of
Section~\ref{sec:whichgains} sharpens the point-estimation frontier.
Implemented as: rescale the estimated full-interaction slab to
$\hat{\mathbf{P}}_l=-\hat{\mathbf{D}}_l/(\theta\,s_{\gamma_l})$
(so that in population $\hat{\mathbf{P}}_l$ targets the secant
analogue of $\partial\bM/\partial\gamma_l$; the scale-pinning
identity is exact only for the local derivative, hence heuristic
here), project to rank one by SVD---the singular-vector sign
ambiguity is resolved by orienting the pair so that the pinned
diagonal entry of the projected slab is nonnegative---then pin the
scale by $x(1+x)=c_l$ with
$c_l=\gamma_l[\hat{\mathbf{P}}_l]_{ll}$ read from the
rank-one-projected slab: real roots require $c_l\ge-1/4$, and the
branch consistent with the maintained nonnegativity,
$x=\big({-1+\sqrt{1+4c_l}}\big)/2$, is admissible ($x\ge0$)
precisely when $c_l\ge0$. \emph{Root failure} is defined as
$c_l<0$: noise has pushed the projected diagonal outside the
admissible cone. (For a channel whose true $M_{ll}=0$ the scale is
pinned instead by the $r_{ll}=1$ normalization of
Theorem~\ref{thm:cross}(ii); the implementation does not attempt to
distinguish a near-zero diagonal from noise and routes such
columns through the same fallback.) Assemble the surviving columns
into $\hat\bM$, requiring every gain to vary with positive
operating level. In this experiment ($100$ Monte Carlo paths, an
exception to the $200$ used elsewhere), the scale root succeeded on
$84$ paths; \emph{conditional on root success}, spectral bias is
$+0.096$ with standard deviation $0.034$ and $\bPhi$ RMSE
$0.056$. Because root success may select easier samples, we report
the matched and unconditional accounting explicitly: on the
\emph{same} $84$ root-success paths the two-stage inversion has
spectral bias $+0.135$ (sd $0.176$) against the SVD slab's
$+0.096$ (sd $0.034$); over \emph{all} $100$ paths, the operational
estimator---SVD slab with two-stage fallback on the $16$
root-failure paths---has bias $+0.098$ (sd $0.058$) against
$+0.131$ (sd $0.169$) for the two-stage inversion alone. The SVD
slab is thus near the oracle-support benchmark ($+0.07$) without
oracle knowledge, and the advantage survives the fallback mixture.
We therefore label it the
\emph{preferred experimental point estimator in this design}:
its calibrated inference is \emph{not} established---the $90\%$
bootstrap result above concerns the two-stage pipeline only, and
extending Theorem~\ref{thm:bootstrap} to the SVD branch would
additionally require an isolated pseudo-true top singular value, a
scale root bounded away from the failure boundary, and an
asymptotically constant fallback branch, none of which we assume
here.
The exact-model nonlinear least-squares
estimator suggested by Theorem~\ref{thm:nonlinear} completes the
picture from both ends: initialized at the truth it is nearly
unbiased ($\bPhi$ RMSE $0.029$, spectral bias $+0.04$)---an
\emph{infeasible oracle diagnostic} showing the sample carries the
information the local-identification theorem promises, not a usable
benchmark---but initialized at the first-order estimates it lands on
spectral-inflating ridges of the nonconvex criterion ($\bPhi$ RMSE
$0.153$, spectral bias $+0.66$, worse than the two-stage inversion
it was meant to refine).

\emph{Optimization path and target.} The exact estimator minimizes
the regime-map least-squares criterion after profiling out the
linear nuisance $\mathbf C$: for fixed $\bPhi$,
$\mathbf C^*(\bPhi)$ is available in closed form
(variable projection \cite{golub1973}). Regular local
identification gives an isolated local target near the truth but
does not make the criterion globally convex or select its basin.
The implemented diagnostic path is therefore explicit:
(1) estimate the regime maps $\hat{\mathbf B}_k$; (2) construct
either the first-order or rank-one SVD-slab initializer; (3) profile
$\mathbf C$; and (4) refine $\bPhi$ with a box-constrained
trust-region reflective solver, retaining the unpenalized criterion
and spectral radius for every endpoint. With first-order
initialization, profiling does not repair the geometry: the profiled
and joint fits land in the same spectral-inflating branch (spectral
bias $+0.54$ for both, criterion roughly $34\times$ its value at
the truth), so the nonconvexity is intrinsic to $\bPhi$. With the
SVD initializer, the diagnostic reaches the low-criterion branch on
24 of 25 paths, as reported below. No data-driven rule is claimed
to identify a global minimizer or to detect the failed path.
Proposition~\ref{prop:mmgeom} supplies a distinct, theoretically
convergent alternative: exact proximal-DCA solves over a hard
spectral feasible set---not the TRF implementation used in the
diagnostic experiments. The Jacobian of Theorem~\ref{thm:nonlinear}
remains available for local Gauss--Newton steps.

\begin{proposition}[Log-coordinate geometry and convergent MM
refinement]\label{prop:mmgeom}
Impose an entrywise floor $\bPhi\ge\varepsilon\mathbf{1}
\mathbf{1}^{\!\top}$, $\varepsilon>0$ (dense approximation;
$O(\varepsilon)$ bias under the uniform stability margin), write
$x=\log\bPhi$ entrywise, and restrict to
$\mathcal{X}=[\log\varepsilon,\log\Phi_{\max}]^{n\times n}\cap
\{x:\rho(e^{x}\,\mathrm{diag}(\bgam_{\max}))\le\bar\rho\}$ with
$\bar\rho<1$ and $\bgam_{\max}>0$ entrywise; by monotonicity of the
Perron root in the entries, every gain regime then satisfies
$\rho(e^{x}\,\mathrm{diag}(\bgam^{(k)}))\le\bar\rho$ on
$\mathcal{X}$. A sparse fixed-support extension requires
irreducibility and a simple Perron root of the max-gain matrix,
which support selection alone does not imply. Then:
(i) every entry of $\bM(e^{x},\bgam^{(k)})$ is a nonnegative
walk-series---a locally uniform limit of posynomials, equivalently a
sum of exponentials of affine functions of $x$---hence convex in
$x$ on $\mathcal{X}$; it is also real-analytic there by the
resolvent representation and the uniform stability margin;
(ii) $x\mapsto\log\rho(e^{x}\,\mathrm{diag}(\bgam_{\max}))$ is
convex \cite{kingman1961}, so $\mathcal{X}$ is compact convex, and
the floor makes $e^{x}\,\mathrm{diag}(\bgam_{\max})$ positive,
hence the Perron root simple and the constraint smooth;
(iii) the profiled criterion
$J(x)=\tfrac12\sum_k\|\hat{\mathbf{A}}_k
+\theta(\bM_k(x)-\bar{\bM}(x))\|_F^2$,
with $\hat{\mathbf{A}}_k$ the regime maps centered across regimes and
$\bar{\bM}$ the across-regime mean, admits the explicit
difference-of-convex decomposition $J=S-C$ with convex
\begin{equation}
S=\sum\nolimits_{k,ij}\big(U_{k,ij}^2+V_{k,ij}^2\big),\quad
C=\tfrac12\sum\nolimits_{k,ij}\big(U_{k,ij}+V_{k,ij}\big)^2,
\label{eq:mmdc}
\end{equation}
where $U_{k,ij}=\hat a_{k,ij}+\theta M_{k,ij}+c$ and
$V_{k,ij}=\theta\bar M_{ij}+c$ are nonnegative convex for a shift
constant $c$ finite by compactness, via
$(U-V)^2=2U^2+2V^2-(U+V)^2$; the proximal DCA step
\begin{equation}
x^{t+1}=\arg\min_{x\in\mathcal{X}}\;
S(x)-\langle\nabla C(x^{t}),x\rangle+\tfrac{\mu}{2}\|x-x^{t}\|^2
\label{eq:mmdca}
\end{equation}
is a majorization-minimization scheme
\cite{sun2017,razaviyayn2013}: it decreases $J$ monotonically with
sufficient decrease, and every limit point is a critical point of
the constrained problem;
(iv) $J$ is real-analytic on a neighborhood of $\mathcal{X}$ and
$\mathcal{X}$ is a compact semianalytic set, so the constrained
objective $F=J+\iota_{\mathcal X}$ has the
Kurdyka--\L{}ojasiewicz property and the iterate sequence converges
to a single critical point \cite{attouch2013}; near a
\emph{strict-interior} nondegenerate pseudo-true critical point with
positive-definite Hessian (e.g., a Gauss--Newton-dominant
small-residual regime under the rank condition of
Theorem~\ref{thm:nonlinear}), convergence is R-linear; at an active
constraint boundary an active-manifold or second-order-growth
condition is required instead.
\end{proposition}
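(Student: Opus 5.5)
I will prove the four parts in order, each resting on a standard structural fact, and I expect the only genuinely delicate step to be the regularity bookkeeping in (iv).

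\emph{Part (i).} Under $\rho(e^{x}\mathrm{diag}(\bgam^{(k)}))\le\bar\rho<1$, the Neumann series \eqref{eq:neumann} gives
$M_{k,ij}=\sum_{m\ge1}[(e^{x}\mathrm{diag}(\bgam^{(k)}))^m]_{ij}$.
Expanding the matrix power, each term is a finite sum over walks $i=i_0\to i_1\to\cdots\to i_m=j$ of
$\prod_s \gamma^{(k)}_{i_{s+1}}\exp(x_{i_s i_{s+1}})$.
Since $\gamma^{(k)}\ge0$, this is a nonnegative coefficient times $\exp$ of an affine function of $x$, hence convex. Partial sums are therefore convex. To pass to the limit, I will use the uniform margin and compactness of $\mathcal X$: they give $\|\bL^m\|\le K\bar\rho'^{\,m}$ uniformly on a neighborhood of $\mathcal X$, for some $\bar\rho<\bar\rho'<1$ chosen by continuity of $\rho$. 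This yields locally uniform convergence, and convexity survives pointwise limits. For real-analyticity, I will write $\bM=(\mathbf I-e^{x}\mathrm{diag}\,\bgam)^{-1}-\mathbf I$, which is a rational function of the analytic map $e^{x}$ with nonvanishing determinant on that neighborhood.

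\emph{Part (ii).} Convexity of $\log\rho(e^{x}D)$ is Kingman's theorem \cite{kingman1961}, applied to the matrix with log-convex entries $x\mapsto x_{ij}+\log\gamma_{\max,j}$. The sublevel set $\{\log\rho\le\log\bar\rho\}$ is then convex, and intersecting it with the box gives a compact convex $\mathcal X$; closedness follows from continuity of $\rho$. Monotonicity of the Perron root in the entries transfers the bound to every regime, since $\bgam^{(k)}\le\bgam_{\max}$. The floor makes the matrix entrywise positive, so Perron's theorem gives a simple Perron root, and smoothness follows from simple-eigenvalue perturbation via \eqref{eq:deltamethod}.

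\emph{Part (iii).} I first verify profiling: minimizing $\sum_k\|\hat{\mathbf B}_k-\mathbf C+\theta\bM_k\|^2$ over $\mathbf C$ gives the regime mean, which yields $J$ with centered $\hat{\mathbf A}_k$. Each residual entry equals $U-V$ with
$U=\hat a_{k,ij}+\theta M_{k,ij}+c$ and $V=\theta\bar M_{ij}+c$.
Both are convex by (i), since they are affine in convex functions with $\theta>0$. Choosing $c\ge\max|\hat a|$ makes them nonnegative, because $M\ge0$ and $\hat a$ is bounded. A nonnegative convex function has convex square (compose with $t^2$, which is nondecreasing on $t\ge0$). The identity $(U-V)^2=2U^2+2V^2-(U+V)^2$ then gives the DC split \eqref{eq:mmdc}. $S$ and $C$ are $C^1$ with Lipschitz gradients on the compact $\mathcal X$ by analyticity. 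Linearizing the concave part $-C$ yields a majorizer that is tight at $x^t$, and the proximal term gives the sufficient decrease $J(x^{t+1})\le J(x^t)-\tfrac{\mu}{2}\|x^{t+1}-x^t\|^2$. Standard proximal-DCA theory \cite{sun2017,razaviyayn2013} then gives $\|x^{t+1}-x^t\|\to0$, and every limit point satisfies $0\in\nabla S-\nabla C+N_{\mathcal X}$, i.e.\ is a critical point of the constrained problem.

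\emph{Part (iv).} I expect this to be the main obstacle, because the KL framework requires the precise regularity classes to line up.
\begin{itemize}
\item $J$ is real-analytic on an open neighborhood of $\mathcal X$ by (i).
\item $\mathcal X$ is semianalytic: it is the box intersected with the sublevel set of $\log\rho$, which is analytic near $\mathcal X$ because the Perron root is simple there.
\item The function $J+\iota_{\mathcal X}$, restricted to a compact set, is then globally subanalytic, so it has the KL property.
\end{itemize}
With KL in hand, I will verify the three conditions of Attouch--Bolte--Svaiter \cite{attouch2013}: sufficient decrease, a relative-error bound on $\partial F$ at $x^{t+1}$ (from the optimality condition of \eqref{eq:mmdca} and Lipschitzness of $\nabla C$), and continuity. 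These give finite length and convergence to a single critical point. For the rate, I will argue locally. At a strict-interior point with positive-definite Hessian, the constraint is inactive and $J$ is locally strongly convex, which gives KL exponent $1/2$ and hence R-linear convergence. I will only remark on the boundary case, since the Hessian alone does not supply a KL exponent there.
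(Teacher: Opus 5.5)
Your proposal is correct and follows essentially the same route as the paper's proof sketch: walk-series convexity with locally uniform convergence and resolvent analyticity for (i); Kingman's theorem plus positivity of $e^{x}\mathrm{diag}(\bgam_{\max})$ for (ii); the identity $(U-V)^2=2U^2+2V^2-(U+V)^2$ with the standard proximal-DCA descent argument for (iii); and, for (iv), real-analyticity and semianalyticity giving the KL property, combined with the sufficient-decrease and relative-error conditions of Attouch--Bolte--Svaiter. Your justification of KL via global subanalyticity of $J+\iota_{\mathcal X}$, continuous on its compact domain, is if anything a slightly more careful version of the paper's remark that adding the indicator of a compact semianalytic set preserves KL.
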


\begin{IEEEproof}[Proof sketch]
(i) Neumann expansion: $[\bM]_{ij}=\sum_{m\ge1}
[(e^{x}\mathrm{diag}(\bgam))^m]_{ij}$, each term a positive
coefficient times $e^{\langle\alpha,x\rangle}$ summed along walks;
convexity passes to the locally uniform limit on
$\rho\le\bar\rho$, while real analyticity follows directly from
the resolvent rational map on the uniform stability neighborhood. (ii) is Kingman's theorem; positivity gives a
primitive matrix, hence a simple, analytic Perron root. (iii) The
decomposition identity is algebra; squares of nonnegative convex
functions are convex; the MM conditions (touching, majorization,
gradient consistency) follow from convexity of $C$, and descent
with limit-point criticality is the standard MM/BSUM argument.
(iv) Compositions, sums, and squares of real-analytic functions are
real-analytic; adding the indicator of the compact semianalytic set
$\mathcal X$ preserves the KL property. The framework of
\cite{attouch2013} applies because the iteration satisfies both
sufficient decrease and the relative-error condition (the latter
following for exact solves from the optimality condition, Lipschitz
continuity of $\nabla C$ on $\mathcal X$, and the normal cone of
$\mathcal X$). The replication code verifies the DC identity,
midpoint convexity, touching/majorization, and finite-step descent;
the asymptotic convergence statements follow from the cited KL/MM
theory, not from finite numerical checks. The supplied DCA check
uses exact box-constrained subproblem solves to verify finite-step
descent; implementing the hard spectral constraint is part of the
solver-design work below.
\end{IEEEproof}

\noindent\emph{Feasible refinement.} From the rank-one SVD-slab
estimate, the procedure projects $x^0=\log\hat\bPhi^{\rm svd}$ onto
$\mathcal X$, profiles $\mathbf C$, and repeatedly minimizes a
backtracking quadratic surrogate over the hard max-gain spectral
feasible set; polishing of the original $J$ is accepted only under
feasibility and descent. This selects a branch without claiming a
global minimizer and is given below.\par

\begin{algorithm}[t]
\caption{Feasible SVD-initialized SCA refinement}
\label{alg:sca}
\begin{algorithmic}[1]
\REQUIRE regime maps $\{\hat{\mathbf B}_k,\bgam^{(k)}\}$,
$\theta$, fixed valid $\bar\rho<1$, floor $\varepsilon$
\STATE form the full-interaction rank-one SVD-slab initializer
$\hat\bPhi^{\rm svd}$; set $x^0$ to its Euclidean projection onto
$\mathcal X$ of Proposition~\ref{prop:mmgeom}
\STATE profile $\mathbf C$ in closed form and retain the
unpenalized criterion $J(x)$
\REPEAT
\STATE compute $g^t=\nabla J(x^t)$ and choose $L_t$ by backtracking
\STATE solve the convex subproblem
$x^{t+1}=\arg\min_{x\in\mathcal X}
\langle g^t,x-x^t\rangle+\tfrac{L_t}{2}\|x-x^t\|^2$
\STATE increase $L_t$ until the accepted point satisfies
$J(x^{t+1})\le J(x^t)+\langle g^t,d^t\rangle+	frac{L_t}{2}\|d^t\|^2$
\UNTIL projected-step/stationarity tolerance is met
\STATE optionally polish the \emph{original} $J$ from $x^{t+1}$
under the same hard constraint; accept only if feasible and $J$
decreases
\ENSURE $\hat\bPhi=\exp(x)$, the unpenalized $J$, feasibility, and
stationarity diagnostics
\end{algorithmic}
\end{algorithm}

\begin{theorem}[Local branch capture and statistical stability]
\label{thm:localcapture}
Let $x^*$ be a strict-interior stationary point of the population
profiled criterion, with $\nabla^2J(x^*)\succeq m\mathbf I$, and
suppose the Hessian is $H$-Lipschitz and the gradient is
$L$-Lipschitz on a ball $B(x^*,r)$ contained in the interior of
$\mathcal X$, with $r\le m/(2H)$. If the accepted backtracking
curvatures are bounded above and $x^0\in B(x^*,r)$, then the
SCA phase described above remains inside the ball and
converges R-linearly to the unique stationary point there (the
optional polishing step is outside this local contraction claim).
If an SVD initializer obeys
$\|x^0_T-x^*\|=b_T+O_p(T_A^{-1/2})$ with $b_T=o(r)$, its branch
capture probability tends to one. With a regime-stratified sample
split (initializer sample $A$, refinement sample $B$), the refined
estimator is asymptotically equivalent, conditional on capture, to
the local root of the refinement score and inherits its sandwich
normal limit. A bootstrap that repeats the split, SVD projection,
hard-constrained refinement, and spectral functional is pointwise
valid when the same singular-gap, root-margin, and branch-capture
conditions hold in probability.
\end{theorem}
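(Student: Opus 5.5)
The plan has four pieces: a deterministic local-contraction argument for the projected-gradient SCA, a probabilistic capture statement, a split-sample asymptotic equivalence, and a bootstrap transfer.

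\emph{Deterministic contraction.} On $B(x^*,r)$ the Hessian bound $\nabla^2J(x^*)\succeq m\mathbf I$ and $H$-Lipschitzness give $\nabla^2J(x)\succeq (m-H\|x-x^*\|)\mathbf I\succeq \tfrac m2\mathbf I$, since $r\le m/(2H)$. Hence $J$ is $m/2$-strongly convex and $L$-smooth on the ball, and $x^*$ is its unique stationary point there.

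Because $B(x^*,r)$ lies in the interior of $\mathcal X$, the subproblem in Algorithm~\ref{alg:sca} coincides with the unconstrained step $x^{t+1}=x^t-g^t/L_t$ whenever that step stays in the interior. I will verify that it does. With accepted curvatures $L_t\in[\underline L,\bar L]$ and $\underline L\ge L$, the gradient step satisfies
\[
\|x^{t+1}-x^*\|^2\le\big(1-\tfrac{m}{2\bar L}\big)\|x^t-x^*\|^2 .
\]
This is the standard strong-convexity argument, using $\langle\nabla J(x),x-x^*\rangle\ge \tfrac m2\|x-x^*\|^2+\tfrac1{L}\|\nabla J(x)\|^2$ restricted to the ball. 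The inequality gives invariance of the ball and R-linear convergence by induction.

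The one subtlety is backtracking: an accepted $L_t$ may lie below $L$. In that case I use the sufficient-decrease inequality instead. It gives $J(x^{t+1})\le J(x^t)$, so the iterates stay in the sublevel set $\{J\le J(x^0)\}\cap B$. Strong convexity makes this set a sublevel ellipsoid inside the ball, possibly after shrinking $r$ by a constant factor. I expect this invariance bookkeeping to be the fiddly step.

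\emph{Branch capture.} If $\|x_T^0-x^*\|=b_T+O_p(T_A^{-1/2})$ with $b_T=o(r)$, then $\Pr(x_T^0\in B(x^*,r/2))\to1$. The sample criterion $J_T$ computed from the $B$-half converges to $J$ uniformly in $C^2$ on the ball. This follows from consistency of the regime maps $\hat{\mathbf B}_k$ via Theorem~\ref{thm:estimation}, combined with smoothness of $\bM(e^x,\bgam)$ from Proposition~\ref{prop:mmgeom}(i). Consequently, with probability tending to one, $J_T$ satisfies the same curvature and Lipschitz hypotheses with constants $m/2$, $2H$, $2L$, and has a unique local root $\hat x_T$ in the ball. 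The deterministic part then applies pathwise on this event, giving capture.

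\emph{Split-sample asymptotics.} Conditional on sample $A$, the initializer is fixed, and the limit point is the local root of $\nabla J_T=0$ computed on sample $B$ alone. Regime stratification keeps every gain regime present in $B$, so the rank condition of Theorem~\ref{thm:nonlinear} is preserved. A standard Taylor expansion of the score around the pseudo-true $x^*$ gives
\[
\sqrt{T_B}(\hat x_T-x^*)=-\nabla^2J(x^*)^{-1}\sqrt{T_B}\,\nabla J_T(x^*)+o_p(1),
\]
and the score CLT follows from the delta method applied to \eqref{eq:clt}. This yields the sandwich normal limit. On the capture event the algorithm output equals $\hat x_T$ up to tolerance, which is $o_p(T_B^{-1/2})$ if the tolerance is chosen small. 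The spectral functional inherits normality by \eqref{eq:deltamethod}, because the Perron root is simple by Proposition~\ref{prop:mmgeom}(ii).

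\emph{Bootstrap.} I will argue pointwise validity by conditioning. If the singular gap, the root margin $c_l$ bounded away from $0$, and capture all hold with bootstrap probability tending to one, then the bootstrap SVD projection and scale root are smooth maps in a neighborhood of the truth. The bootstrap estimator is then the same smooth functional of bootstrap regime maps, and validity transfers from the block-bootstrap consistency for $\hat{\mathbf B}_k$ underlying Theorem~\ref{thm:bootstrap} via the functional delta method.

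The main obstacle is making "conditions hold in probability" rigorous uniformly across bootstrap draws, in particular the event on which the branch is captured. I would handle it by intersecting high-probability events and bounding the discrepancy in total variation, which gives pointwise but not uniform validity.
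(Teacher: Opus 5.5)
Your plan is correct and follows the paper's proof sketch essentially step for step: the Hessian bound $\nabla^2J\succeq(m-Hr)\mathbf I\succeq\tfrac m2\mathbf I$ and uniqueness on the ball, gradient-step contraction for accepted curvatures $L_t\ge L$ with modulus below one from the upper curvature bound, capture from the initializer rate, the split-sample score expansion with sandwich limit, and bootstrap transfer by repeating every map under the gap, margin and capture events. You are also more explicit than the paper about passing from $J$ to $J_T$ and about the stopping tolerance, and two small repairs are needed: the inequality you quote is false as written (for $J=\tfrac L2\|x-x^*\|^2$ it would require $L\ge\tfrac m2+L$) and should read $\langle\nabla J(x),x-x^*\rangle\ge\tfrac m4\|x-x^*\|^2+\tfrac1{2L}\|\nabla J(x)\|^2$, or you can argue as the paper does through the mean-value matrix $\int_0^1\nabla^2J(x^*+s(x-x^*))\,ds$ with spectrum in $[m/2,L]$, either of which still gives your factor $1-m/(2\bar L)$; and in the case $L_t<L$ descent alone does not keep $x^{t+1}$ in $B(x^*,r)$, since a long step can land in another branch's sublevel component, so you must first bound the step by $\|x^{t+1}-x^t\|\le(L/\underline L)\|x^t-x^*\|$ and shrink the radius before invoking the sublevel set (the paper simply takes $L_t\ge L$).
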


\begin{IEEEproof}[Proof sketch]
By Hessian Lipschitz continuity,
$\nabla^2J(x)\succeq (m-Hr)\mathbf I\succeq(m/2)\mathbf I$ on
$B(x^*,r)$, so $x^*$ is the unique stationary point there. Strict
interiority makes the spectral and box constraints inactive on a
possibly smaller ball. For an accepted quadratic curvature
$L_t\ge L$, the unconstrained SCA step is a gradient step
$x^{t+1}=x^t-L_t^{-1}\nabla J(x^t)$. The mean-value representation
of the gradient and the Hessian bounds make this map a contraction
on the ball; if $L_t$ is also uniformly bounded above, its
contraction modulus is bounded strictly below one. Induction keeps
all iterates in the ball and gives R-linear convergence. Projection
onto the same convex feasible set cannot increase the distance when
the local solution is feasible. The optional polishing step is not
part of this contraction statement; its reported implementation is
accepted only under feasibility and objective decrease.

The initializer condition implies
$\Pr(\|x_T^0-x^*\|<r)\to1$, hence branch capture with probability
tending to one. On that event the converged refinement is the unique
local solution of the refinement estimating equation. A Taylor
expansion of its score gives
$\sqrt{T_B}(\hat x-x^*)=-\nabla^2J(x^*)^{-1}
T_B^{-1/2}\sum_{t\in B}s_t(x^*)+o_p(1)$, which yields the sandwich
normal limit. Regime-stratified splitting separates the initializer
sample from this score expansion. Conditional bootstrap consistency
follows by repeating every map and requiring the bootstrap singular
gap, root margin, and branch-capture event to hold with probability
approaching one; the smooth spectral functional then follows by the
delta method. Without branch stability the result is not uniform.
\end{IEEEproof}

The implementation uses a fixed max-gain bound $\bar\rho=0.60$
(true max-gain radius $0.45$), convex quadratic subproblems, and
feasible polishing. From the same SVD initializers over 25 paths,
hard SCA eliminates the two catastrophic box-TRF branches, is
feasible and monotone on every path, and reduces spectral bias from
$+0.100$ to $+0.043$ and $\bPhi$ RMSE from $0.078$ to $0.055$.
On the adversarial path (TRF $J/J^*=11.16$, $\hat\rho=1.324$), hard
SCA gives $0.944$ and $0.324$; all 72 projected perturbations at
log-radii $0.05$--$1.2$ remain feasible and finish below
$J/J^*=0.987$. This is design-specific, not a global guarantee;
the full comparison and the negative soft-penalty diagnostic are in
Appendix~A.

The remaining computational questions are thereby narrowed to
scalable solvers for the convex subproblem, uniform branch theory,
and a path-specific $\bar\rho$ robust to a corrupted SVD
initializer. Estimator-centered full-pipeline intervals remain
unreliable: the split/SVD/SCA bootstrap covers structural $\rho$ on
only $50\%$ (basic) and $40\%$ (percentile) of 20 paths, and
split-jackknife worsens bias. The inferential target is therefore
changed from the distribution of biased $\hat\rho$ to the spectral
null itself. A null-imposed profile-LR bootstrap improves boundary
size to $10\%$; the restricted outward score \eqref{eq:nullscore}
yields size $6.7\%$ (2/30) at the $5\%$ boundary and monotone power
$25\%/85\%$ at true $\rho=0.33/0.36$ (20 paths each), with every
restricted/bootstrap fit feasible. Cross-fitted one-step correction
is numerically unstable (all paths hit its step cap) and is not
used. Details and all negative results are in Appendix~A.

For transmission alarms the signed HC rule remains the screening
statistic. For spectral alarms we pair the hard-SCA point trace with
the null-imposed outward-score test; Fig.~\ref{fig:method} summarizes
both layers. The older two-stage basic block
bootstrap is reported only as a sensitivity interval: it covers
$90\%$ at nominal $95\%$ in its own experiment, versus
$45$--$61\%$ for delta intervals and $55\%$ for a naive percentile
interval. The next two results formalize the hard-SCA score test and
the two-stage sensitivity interval, respectively.

\begin{figure}[t]
\centering
\includegraphics[width=0.95\columnwidth]{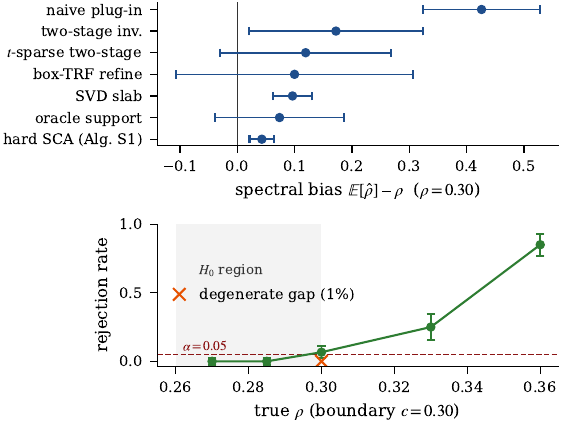}
\caption{Spectral estimation and inference layers. Top: bias of the
spectral point estimators at true $\rho=0.30$ (mean $\pm$ one
empirical sd); hard SCA is the operational choice. Bottom:
rejection rate of the null-imposed outward-score bootstrap against
the boundary $c=0.30$ (binomial Monte Carlo bars): interior drift
points are conservative, boundary size is $6.7\%$, power reaches
$85\%$ at $\rho=0.36$, and the near-degenerate design (cross) does
not inflate size.}
\label{fig:method}
\end{figure}

\begin{theorem}[Null-imposed score inference for hard SCA]
\label{thm:nullscore}
Let $h_c(x)=\rho(e^xD_0)-c$ and let $\hat x_R(c)$ minimize the
profiled criterion on the least-favorable boundary $h_c(x)=0$
(subject also to the fixed safe constraint of Algorithm~\ref{alg:sca}). Assume the local conditions of
Theorem~\ref{thm:localcapture}, $\nabla h_c(x_0)\ne0$, a simple
Perron root, and a conditional CLT for the restricted score. At the
least-favorable boundary $h_c(x_0)=0$, define the outward statistic
\begin{equation}
\mathcal S_T(c)=\max\!\left\{0,-\sqrt T\,
\nabla J_T(\hat x_R)^{\!\top}
\frac{\nabla h_c(\hat x_R)}{\|\nabla h_c(\hat x_R)\|}\right\}.
\label{eq:nullscore}
\end{equation}
A design-preserving dependent-wild bootstrap generated under the
restricted fit, rerunning SVD initialization and restricted hard
SCA in every draw, consistently estimates the pointwise boundary
law of $\mathcal S_T(c)$. The resulting one-sided test of
$H_0:\rho(e^xD_0)\le c$ has asymptotic level $\alpha$; inversion
over $c$ yields a one-sided confidence set. Interior nulls are
conservative. Degeneracy of the singular gap, scale-root margin, or
constraint slack is addressed next.
\end{theorem}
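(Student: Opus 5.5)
The plan is to treat $\mathcal S_T(c)$ as a constrained (Lagrange-multiplier) score statistic and reduce it, on a high-probability event, to a smooth functional of the restricted score. I then transfer that representation to the bootstrap world by repeating the same maps.

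\emph{Step 1: restricted estimator.} First localize the restricted estimator. Take the least-favorable point $x_0$ on the boundary, with $h_c(x_0)=0$ and $x_0$ the restricted population minimizer. Apply Theorem~\ref{thm:localcapture} to the restricted problem. The manifold $\{h_c=0\}$ is a smooth hypersurface near $x_0$, since $\nabla h_c(x_0)\ne0$ and the Perron root is simple, so $h_c$ is analytic. Restricted hard SCA started from the SVD initializer therefore captures the local restricted root with probability tending to one. On that event $\hat x_R$ solves the KKT system
\[
\nabla J_T(\hat x_R)+\hat\lambda\,\nabla h_c(\hat x_R)=0,
\qquad h_c(\hat x_R)=0,
\]
with the box and fixed-safe constraints inactive by strict interiority. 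It follows that the outward statistic equals $\max\{0,\sqrt T\,\hat\lambda\,\|\nabla h_c(\hat x_R)\|\}$, which is just the positive part of the scaled multiplier.

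\emph{Step 2: linearize the multiplier.} Next, Taylor-expand the KKT system around $x_0$, using $\nabla^2J\succeq m\mathbf I$ and Lipschitz Hessian and gradient. Write $P=\mathbf I-nn^{\top}$ with $n=\nabla h_c(x_0)/\|\nabla h_c(x_0)\|$, and $H=\nabla^2J(x_0)$. Projecting onto the normal direction and solving the tangent block gives
\[
\sqrt T\,\hat\lambda\|\nabla h_c\|
=-\,n^{\top}\big(\mathbf I-HP(PHP)^{+}P\big)\sqrt T\,\nabla J_T(x_0)+o_p(1).
\]
This expresses the statistic as a fixed linear functional of the score. By the assumed conditional CLT, $\sqrt T\,\nabla J_T(x_0)\to\mathcal N(\sqrt T\nabla J(x_0)\text{-drift},\Omega)$. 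At a boundary null the population multiplier is $\le0$, meaning $\nabla J(x_0)$ does not point outward. The limit law is therefore $\max\{0,\mu+\sigma Z\}$ with $\mu\le0$, where $\mu=0$ at the least-favorable point and $\mu\to-\infty$ for interior nulls. This yields conservativeness of interior nulls and exact asymptotic level at the boundary, provided critical values come from the $\mu=0$ law.

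\emph{Step 3: bootstrap.} Data are generated from the restricted fit, which imposes $\mu=0$ in the bootstrap world. The dependent-wild multipliers preserve the design and the long-run covariance, so $\Omega$ is estimated consistently under the mixing conditions of Theorem~\ref{thm:estimation}. Rerunning SVD initialization and restricted SCA in every draw reproduces Steps 1–2 conditionally. This needs the bootstrap singular gap, root margin, and capture event to hold with probability approaching one, exactly as in Theorem~\ref{thm:localcapture}. The bootstrap statistic then has the same linear representation with bootstrap score, so Pólya's theorem on the continuous part, together with the atom at zero handled by a one-sided quantile, gives Kolmogorov consistency away from the atom. The level-$\alpha$ statement follows for $\alpha<1/2$, because the critical value sits in the continuous region. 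Test inversion over $c$ then gives the confidence set.

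\emph{Main obstacle.} I expect the hardest step to be the conditional bootstrap analogue of branch capture for the restricted problem. Bootstrap samples perturb both the SVD initializer and the constraint surface, so I must show that the restricted bootstrap root lies in a ball around $\hat x_R$ of radius independent of $T$. I would first try to bound the initializer error by the unrestricted rate $O_p(T^{-1/2})$ plus $b_T$. I would then argue that projection onto $\{h_c=0\}\cap\mathcal X$ is Lipschitz near $x_0$, by the implicit function theorem with $\nabla h_c\ne0$, so the perturbed initializer stays inside $B(x_0,r)$. The result is pointwise only: uniformity would fail exactly when the singular gap, root margin, or slack degenerates, which the statement defers.
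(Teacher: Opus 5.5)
Your proposal is correct and follows essentially the same route as the paper's sketch. The shared steps are:
\begin{itemize}
\item the KKT representation of the restricted fit, under which $\mathcal S_T(c)$ is the positive part of the scaled outward multiplier;
\item a Taylor expansion plus the restricted-score CLT, which gives the one-sided Gaussian-cone limit with its atom at zero;
\item a null-imposed dependent-wild bootstrap in which SVD initialization and restricted hard SCA become asymptotically constant smooth maps under branch capture, the singular gap, and the root margin;
\item conservativeness of interior nulls, and test inversion over $c$.
\end{itemize}
Two details you add are refinements the paper leaves implicit: the explicit tangent-projected multiplier formula $-n^{\top}(\mathbf I-HP(PHP)^{+}P)\sqrt T\,\nabla J_T(x_0)$ (with $n$ the unit normal to the boundary), and the observation that the bootstrap critical value lies in the continuous part of the limit law only for $\alpha<1/2$.
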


\begin{IEEEproof}[Proof sketch]
At the regular boundary, the equality-restricted local solution
satisfies
$\nabla J_T(\hat x_R)+\hat\lambda\nabla h_c(\hat x_R)=0$.
The positive part in \eqref{eq:nullscore} retains only an outward
multiplier, which is the relevant cone direction for the one-sided
null $h_c\le0$. A Taylor expansion around
the boundary truth and the restricted-score CLT give its one-sided
Gaussian-cone limit (including the boundary point mass). Generating
multipliers from the restricted residual/score array imposes the
null and reproduces the same linear term conditionally on the fixed
gain design. Branch capture, the singular gap, and the scale-root
margin make SVD initialization and the restricted hard-SCA local
root asymptotically constant smooth maps; the dependent-wild
bootstrap CLT therefore transfers to the outward statistic. The
boundary is least favorable for the one-sided composite null;
strictly interior nulls have zero outward multiplier asymptotically
and are conservative. Inverting the family of level-$\alpha$ tests
gives the stated confidence set. Degenerate margins are handled by
the switching construction of Proposition~\ref{prop:uniformswitch}.
\end{IEEEproof}

\begin{proposition}[Switched uniform spectral alarm]
\label{prop:uniformswitch}
Fix $\delta_0>0$ and $\alpha_1=\alpha_2=\alpha/2$. Suppose (i) the
level-$\alpha_2$ score test is asymptotically valid uniformly over
the margin class (relative Perron gap, root margin, and slack
$\ge\delta_0$), including drifting nulls $\rho_T=c-b/\sqrt T$,
$b\ge0$, for which the boundary is stochastically least favorable;
(ii) $\hat g_{\mathrm{LCB}}$ is a uniformly valid $(1-\alpha_1)$
lower confidence bound for the margin; (iii) the fallback test is
conservative at level $\alpha_2$ without margin conditions. Then the
switched test---score if $\hat g_{\mathrm{LCB}}\ge\delta_0$,
fallback otherwise---has uniform asymptotic size at most $\alpha$
over the whole class.
\end{proposition}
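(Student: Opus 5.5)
The plan is a standard union-bound decomposition over the two branches of the switch, made uniform by taking suprema over the null class before passing to the limit. Fix any sequence of data-generating processes $P_T$ in the null class, i.e.\ with $\rho_T\le c$, allowed to drift in both the spectral radius and the margin. Write $\mathcal{E}_T=\{\hat g_{\mathrm{LCB}}\ge\delta_0\}$ for the event that the score branch is used, and let $g_T$ denote the true margin under $P_T$. The rejection event is contained in the union of two events: the score test rejects on $\mathcal{E}_T$, or the fallback test rejects on $\mathcal{E}_T^c$. Hence
\begin{equation*}
P_T(\text{reject})\le P_T(\text{score rejects},\ \mathcal{E}_T)+P_T(\text{fallback rejects}).
\end{equation*}
By (iii), the second term is asymptotically at most $\alpha_2$ uniformly, since the fallback needs no margin conditions.

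Next I split the first term according to whether $P_T$ lies in the margin class. If $g_T\ge\delta_0$, the first term is bounded by $P_T(\text{score rejects})$. Hypothesis (i) bounds this by $\alpha_2+o(1)$ uniformly over the margin class. This uniformity covers drifting nulls $\rho_T=c-b/\sqrt T$, and those are dominated by the boundary case because it is least favorable. If instead $g_T<\delta_0$, then $\mathcal{E}_T$ requires $\hat g_{\mathrm{LCB}}\ge\delta_0>g_T$. That is exactly the event that the lower confidence bound fails to cover. By (ii), its probability is at most $\alpha_1+o(1)$ uniformly. In either case the first term is at most $\max(\alpha_1,\alpha_2)+o(1)$.

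Adding the two terms gives a bound of $\alpha_1+\alpha_2+o(1)=\alpha+o(1)$ in the first case and $\alpha_1+\alpha_2+o(1)=\alpha+o(1)$ in the second. The margin-deficient case does not double count: on $\mathcal{E}_T^c$ only the fallback contributes, and on $\mathcal{E}_T$ only the non-coverage event contributes.

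To obtain the uniform statement, I argue by contradiction over sequences. If $\limsup_T\sup_{P\in\mathcal{P}_0}P(\text{reject})>\alpha$, pick near-maximizing $P_T$. Each $P_T$ falls in one of the two cases above, and the case-wise bounds hold uniformly, so we get a contradiction. This reduction is legitimate because every hypothesis is stated as a uniform asymptotic statement, i.e.\ as a $\limsup$ of a supremum.

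I expect the main obstacle to be the boundary of the margin class, where $g_T$ sits at or near $\delta_0$. There the score test's validity must come from (i) while the switch may still select the fallback. The decomposition above handles this without any continuity of the switch, because the two cases are exhaustive and each is bounded separately.

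The remaining subtlety is definitional. The margin must be a single scalar, so I take it as the minimum of the relative Perron gap, the root margin, and the slack. The LCB in (ii) must then be a lower bound for this minimum, so that $\hat g_{\mathrm{LCB}}\ge\delta_0$ on a covering event implies that all three margins exceed $\delta_0$. I would state this convention explicitly before running the argument.
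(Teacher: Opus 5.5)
Your proof is correct and matches the paper's argument: the same split $\varphi=\pi\varphi_{\mathrm{sc}}+(1-\pi)\varphi_{\mathrm{fb}}$, with the fallback term bounded by (iii) in both cases, and the score-branch term bounded by (ii) through LCB non-coverage ($\alpha_1$) when the margin is below $\delta_0$ and by (i) ($\alpha_2$) otherwise, so each case totals $\alpha$. Your explicit passage from pointwise to uniform bounds by taking suprema over the null class, and your convention that the margin is the minimum of the three component margins, are sensible clarifications of points the paper leaves implicit. One small wording fix: $\{\hat g_{\mathrm{LCB}}\ge\delta_0\}$ is contained in the non-coverage event rather than equal to it.
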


\begin{IEEEproof}[Proof sketch]
Write $\varphi_{\rm sc}$ for the level-$\alpha_2$ score test,
$\varphi_{\rm fb}$ for the fallback, and
$\pi=\mathbf 1\{\hat g_{\rm LCB}\ge\delta_0\}$ for the pretest. The
switched test is $\varphi=\pi\varphi_{\rm sc}+(1-\pi)\varphi_{\rm fb}$.
Fix a null DGP $P$. If its margin is below $\delta_0$, then
$E_P[\pi\varphi_{\rm sc}]\le P(\hat g_{\rm LCB}\ge\delta_0)
\le\alpha_1+o(1)$ by validity of the lower confidence bound, and
$E_P[(1-\pi)\varphi_{\rm fb}]\le E_P[\varphi_{\rm fb}]\le\alpha_2+o(1)$
by (iii), so $E_P[\varphi]\le\alpha_1+\alpha_2$. If the margin is at
least $\delta_0$, then $E_P[\pi\varphi_{\rm sc}]\le
E_P[\varphi_{\rm sc}]\le\alpha_2+o(1)$ uniformly by (i) and
$E_P[(1-\pi)\varphi_{\rm fb}]\le\alpha_2+o(1)$, so
$E_P[\varphi]\le2\alpha_2$. With $\alpha_1=\alpha_2=\alpha/2$ both
bounds equal $\alpha$, uniformly over the class. $\blacksquare$

Least-favorability under drift (condition (i)): with
$\rho_T=c-b/\sqrt T$, $b\ge0$, the restricted fit remains on the
boundary while the population outward multiplier acquires a
nonnegative inward shift of order $b$, so the outward statistic
converges to $\max\{0,Z-\kappa b\}$ for some $\kappa>0$, which is
stochastically decreasing in $b$; the boundary $b=0$ is largest.

Numerical evidence (all keys in
\texttt{results/exp5\_gamma\_rank/summary.json}). Null grid at the
$5\%$ level: regular boundary $2/30$; drift $\rho=0.285$: $0/20$;
drift $\rho=0.27$: $0/20$; near-degenerate boundary (two
lower-triangular blocks, relative Perron gap $1\%$): $0/20$; pooled
$2/90$. Every restricted and bootstrap fit succeeded, including all
$20\times59$ draws in the degenerate design. Margin-bound validity
check: at the degenerate design the estimated relative gap is
upward-biased (true $0.010$, point mean $0.143$, range
$[0.022,0.411]$; nominal $95\%$ percentile lower bound mean $0.047$,
covering the true margin on only $5\%$ of paths), and point-estimate
gap distributions of the regular and degenerate designs overlap
(regular minimum $0.055$ vs.\ degenerate maximum $0.263$ over 30
paths each). A margin bound that is valid at degeneracy is therefore
the remaining open input to condition (ii); given one, the
proposition delivers uniform level without further conditions. As an
illustration, switching at $\delta_0=0.05$ with the (invalid) naive
bound passes $97.5\%$ of regular-design paths and $35\%$ of
degenerate-design paths, and the switched rejection rate is zero on
all recorded null designs.
\end{IEEEproof}

Numerically, over a null grid of the regular boundary, two interior
drift points ($\rho=0.285,0.27$), and a near-degenerate boundary
(relative Perron gap $1\%$), the worst-case size is the
regular-boundary $6.7\%$; drift and degeneracy reject $0/20$ each
(pooled $2/90$)---both push the test conservative. The binding
condition is (ii): the naive percentile gap bound inherits the
estimated gap's upward bias at degeneracy (nominal $95\%$, covers
$5\%$); a degeneracy-valid margin bound remains open, not the
alarm's size on the tested class (see the proof above).

\begin{theorem}[Two-stage bootstrap validity]\label{thm:bootstrap}
If support selection is consistent, the two-stage fixed-point
inversion is locally unique and smooth, its Perron root is simple,
and a design-preserving dependent-multiplier score bootstrap obeys
the conditional CLT, then basic bootstrap intervals for the
two-stage pseudo-true spectral functional $\rho^{(2)*}$ attain
pointwise nominal coverage; fixed-$\alpha$ selection and local-to-zero couplings are excluded.
\end{theorem}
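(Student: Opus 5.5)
The proof runs through the chain coefficients $\to$ selected support $\to$ fixed-point inversion $\to$ Perron root. I reduce it to a standard smooth-functional bootstrap argument on the event that the selected support equals the pseudo-true support.

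\emph{Step 1: the selection step becomes asymptotically inactive.} Let $\mathcal{S}^*$ be the pseudo-true support and $\hat{\mathcal{S}}$ the support selected by the threshold $c_T$. Consistency of selection gives $\Pr(\hat{\mathcal{S}}=\mathcal{S}^*)\to1$. The same holds for the bootstrap support $\hat{\mathcal{S}}^*$ with conditional probability tending to one in probability. For this I will require $c_T\to0$ and $\sqrt{T}c_T\to\infty$, and use the $\sqrt{T}$-tightness of $\hat{\mathbf{d}}_i$ from Theorem~\ref{thm:estimation}. This is exactly where fixed-$\alpha$ selection and local-to-zero couplings are excluded. Under fixed $\alpha$, zero entries enter with nonvanishing probability. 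Under couplings of order $T^{-1/2}$, entries straddle the threshold with nondegenerate probability, so the estimator stays a mixture of branches. On the good event, $\hat\bPhi^{(2)}$ is the fixed-support solution of the estimating equation following \eqref{eq:twostage}.

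\emph{Step 2: the coefficient-to-$\rho$ map is smooth.} Define $g:\mathbf{S}\mapsto\bPhi^{(2)}(\mathbf{S})$ on $\mathcal{S}^*$, and then $\rho^{(2)}(\mathbf{S})=\rho\big(g(\mathbf{S})\,\mathrm{diag}(\bgam)\big)$. Local uniqueness and smoothness of the inversion mean the implicit function theorem applies. The Jacobian of $\Psi(\bPhi,\mathbf{S})=\bPhi-(\mathbf{I}-\bPhi\mathrm{diag}(\bar\bgam))\mathbf{S}\,\mathrm{diag}(h(\bPhi))^{-1}$ in $\bPhi$ is nonsingular at the pseudo-true point, and strict interiority keeps the nonnegativity projection inactive. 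Hence $g$ is $C^1$ near $\mathbf{S}^*$. A simple Perron root is differentiable in the entries, with gradient \eqref{eq:deltamethod}. So $\rho^{(2)}$ is $C^1$ at $\mathbf{S}^*$ with gradient $\nabla\rho^{(2)}$.

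\emph{Step 3: delta method and its bootstrap analogue.} Theorem~\ref{thm:estimation} gives $\sqrt{T}(\hat{\mathbf{S}}-\mathbf{S}^*)\Rightarrow\mathcal{N}(0,\mathbf{V})$ by linearity of the sensitivity map in $\hat{\mathbf{d}}$. Combined with Step~1, the delta method gives
\[
\sqrt{T}\,(\hat\rho^{(2)}-\rho^{(2)*})\Rightarrow\mathcal{N}\big(0,\nabla\rho^{(2)\top}\mathbf{V}\nabla\rho^{(2)}\big).
\]
The assumed conditional CLT for the design-preserving dependent-multiplier score bootstrap gives
\[
\sqrt{T}\,(\hat{\mathbf{S}}^*-\hat{\mathbf{S}})\Rightarrow\mathcal{N}(0,\mathbf{V})
\]
conditionally, in probability. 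The multipliers preserve both the fixed gain design and the date-specific centering of Theorem~\ref{thm:estimation}. The conditional delta method then transfers this to $\hat\rho^{(2)*}-\hat\rho^{(2)}$; this uses $\hat{\mathbf{S}}\to\mathbf{S}^*$ and differentiability in a neighborhood. Bootstrap runs that leave the good support event or the IFT neighborhood have vanishing conditional probability, so they do not affect the limit.

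\emph{Step 4: coverage.} Bootstrap and sampling limits coincide and are continuous. Pólya's theorem then gives uniform convergence of the bootstrap CDF to the sampling CDF, and the basic interval $[2\hat\rho-q^*_{1-\alpha/2},\,2\hat\rho-q^*_{\alpha/2}]$ attains coverage $1-\alpha$ pointwise. If the gradient is degenerate, $\nabla\rho^{(2)\top}\mathbf{V}\nabla\rho^{(2)}=0$, the limit is a point mass and I would need a nondegeneracy assumption. I will add it as implicit.

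\emph{Main obstacle.} The hardest step is Step~1 together with the bootstrap-world selection. One must show that re-running selection inside each bootstrap draw reproduces $\mathcal{S}^*$ with conditional probability $\to1$. This needs $c_T$ to separate $O_p(T^{-1/2})$ noise from the fixed nonzero pseudo-true couplings uniformly in bootstrap draws. I would first try a Markov bound on $\sqrt{T}\max_{ij}|\hat d^*_{ij}-\hat d_{ij}|$ using bootstrap tightness. The rest is standard.
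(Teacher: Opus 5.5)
Your proposal is correct and follows essentially the same route as the paper: the selection event becomes eventually constant, so the estimator reduces to a fixed-support composition of OLS coefficients with the IFT-smooth inversion and a simple-eigenvalue functional; the delta method then transfers the CLT of Theorem~\ref{thm:estimation}, and the assumed conditional multiplier-bootstrap CLT together with continuity of the limit law gives basic-interval validity. Your rates $c_T\to0$, $\sqrt{T}c_T\to\infty$ are the coefficient-scale version of the paper's $t$-scale rates $c_T\to\infty$, $c_T/\sqrt{T}\to0$. Your treatment of bootstrap-world reselection and your nondegenerate-variance condition spell out what the paper's sketch leaves implicit, the latter being what ``continuity of the limit law'' requires.
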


\begin{IEEEproof}[Proof sketch]
Under (i) the thresholded selection event is eventually constant
($c_T\to\infty$ kills null entries, $c_T/\sqrt{T}\to0$ plus
separation retains true ones), so with probability approaching one
the estimator equals the fixed-selection estimator, a composition of
OLS coefficients with the smooth maps \eqref{eq:twostage} (smooth by
(ii), via the implicit function theorem) and a simple-eigenvalue
functional (smooth by (iii), standard perturbation theory). The
delta method transfers the CLT of Theorem~\ref{thm:estimation}; the
dependent-multiplier score bootstrap rebuilds the estimating
equations from multiplier-weighted score blocks along the fixed
gain path, reproducing the same triangular-array structure
(including its gain-modulated heteroskedasticity) conditional on
the design; conditional distributional consistency of the
multiplier score sum is condition (iv) (the dependent wild
bootstrap of \cite{shao2010}, with classical blocking arguments
\cite{kunsch1989} and estimating-equation extensions
\cite[Ch.~4]{lahiri2003} supplying the smooth-functional
transfer). Basic-interval validity follows from
distributional consistency and continuity of the limit law.
\end{IEEEproof}

The implemented two-stage experiment uses fixed-$\alpha$ screening
and pairwise tuple resampling, whereas
Theorem~\ref{thm:bootstrap} covers a design-preserving multiplier
scheme under thresholded selection. Its $90\%$ finite-sample
coverage therefore measures the gap to those conditions; uniform
local-to-zero selection validity and sequential detection-delay
theory remain open.

\section{Synthetic Validation}\label{sec:sims}

All experiments use $n=5$, $T=750$, $200$ Monte Carlo paths,
$\theta=0.85$ known, a sparse nonnegative $\bPhi$ with four true
off-diagonal couplings, $\rho(\bL_t)\in[0.3,0.55]$, a common factor
in $\mathbf{r}_1$, and a constant lead--lag confound $\mathbf{C}$
with nonzero entries placed on \emph{zero-coupling} pairs---the
adversarial placement. Gains take three levels $\{0.5,1.0,1.5\}$;
the decision rule is the signed $t$-test at $5\%$
(Algorithm~\ref{alg:ident}). Scenarios: \textbf{A} staggered
staircase gains (columns strongly correlated across channels, Gram
condition number ${\approx}247$); \textbf{A2} independent
regime-switching gains (${\approx}12$); \textbf{B} constant gains
(rank violation); \textbf{C} exclusion violation
($\mathbf{C}_t$ co-moving with the mean gain, loading $0.8$);
\textbf{D} as A2 with the confound sign flipped
(reversal-mimicking). Table~\ref{tab:sims} reports the grid.

\begin{table}[t]
\caption{Synthetic validation and benchmarks.}
\vspace{2pt}
\label{tab:sims}
\centering
\footnotesize
\setlength{\tabcolsep}{3.5pt}
\begin{tabular}{@{}llccc@{}}
\toprule
Scenario & estimator & power & size/FA & RMSE \\
\midrule
A (corr.\ $\gamma$, $\kappa{\approx}247$) & ours & 1.00 & 0.22 & 0.28 \\
A2 (indep.\ $\gamma$, $\kappa{\approx}12$) & ours & 1.00 & 0.12 & --- \\
B (constant $\gamma$) & ours & \multicolumn{3}{c}{\scriptsize unidentified flagged, 100\% of paths} \\
C (excl.\ violated) & ours & 1.00 & 0.04 & 0.34 \\
D (neg.\ confound) & ours & 1.00 & 0.12 & --- \\
\midrule
A (pos.\ confound) & level/Granger & 1.00 & 0.04 & 0.09 \\
D (neg.\ confound) & level/Granger & 1.00 & \textbf{0.53} & --- \\
A (2-regime split) & variance-ratio$^{\ddagger}$ & --- & --- & \textbf{43.4} \\
A (tuned$^{\dagger}$) & variance-ratio$^{\ddagger}$ & --- & --- & 0.60 \\
\bottomrule
\end{tabular}

\vspace{3pt}
\parbox{\columnwidth}{\footnotesize Note: power = fraction of true couplings with
$t<-1.645$; size/FA = same rule on zero-coupling entries; RMSE over
all off-diagonal entries. The level/Granger benchmark regresses
post-window on pre-window outputs without gain interactions; the
variance-ratio benchmark ($^{\ddagger}$an illustrative two-regime
$\Delta\mathrm{Cov}/\Delta\mathrm{Var}$ construction in the spirit of
\cite{rigobon2003}, not a full heteroskedasticity-identification
implementation) computes
$\Delta\mathrm{Cov}/\Delta\mathrm{Var}$ across a two-regime split of
each sender's gain. $^{\dagger}$Tuned in the benchmark's favor:
best regime pair per sender, regularized denominators (skipping
variance differences below $20\%$ of the sender's variance), median
across admissible pairs---even so, RMSE is twice ours and $22\%$ of
paths yield no admissible estimate. Dashes: not recorded for that
cell.}
\end{table}

Five findings. \emph{(i) Sharpness.} Under constant gains the
interaction regressor is degenerate and the estimator reports
non-identification on every path---the failure predicted by
Theorem~\ref{thm:ident}(ii) occurs exactly at the theorem's
boundary, and is \emph{detectable from the data}, not silent.
\emph{(ii) Gram conditioning.} At equal per-channel gain variances,
moving from correlated staircase paths to independent regimes drops
the false-alarm rate from $0.22$ to $0.12$: identification strength
is a property of the joint design---and precisely because the
simulated signals are strongly factor-driven
($\boldsymbol{\Sigma}_r$ near rank one), the Hadamard structure of
Theorem~\ref{thm:ident}(iii) puts the gain-Gram conditioning in
charge of the variance inflation. \emph{(iii) Confound immunity is
signed.} The level/Granger benchmark is fine when the confound
happens to have the continuation sign (FA $0.04$) and breaks
completely when it mimics reversal (FA $0.53$); the interaction
estimator's FA is unchanged ($0.12$) because a constant confound of
either sign cannot track $\bgam_t$. \emph{(iv) Duality with
heteroskedasticity identification.} On data generated with constant
shock variances, the Rigobon-style estimator divides by
near-zero variance differences and returns RMSE two orders of
magnitude above ours ($43.4$ vs.\ $0.28$), and tuning it in its own
favor still leaves RMSE at twice ours with $22\%$ of paths
producing no admissible estimate---the numerical face of
the duality of Section~\ref{sec:related}. \emph{(v) The estimand is
the resolvent, and the report separates three entry classes.} True
direct edges, zero-direct entries that are network-reachable, and
unreachable pairs are different objects: at the baseline gain level
the detector flags 2-hop-reachable zeros at $39\%$ while holding
size statistically indistinguishable from nominal ($5.3\%$ Monte
Carlo estimate at the $5\%$ rule) on unreachable zeros. Whether
reachable-zero flags count as detections (transmission screening) or
false positives (direct-edge recovery) depends on the declared
target; only unreachable pairs are nulls for both. A gain-strength
sweep ties the flag rate to the theory: at $\rho(\bL)\approx0.15$
reachable-zero flags fall to $7\%$ (the first-order regime, where
the resolvent is close to the direct coupling), at $\approx0.30$
they are $39\%$, and at $\approx0.45$ they reach $84\%$ with
unreachable-pair size drifting to $14\%$---the quantitative
footprint of higher-order paths, and the reason spectral assembly
needs the second-stage inversion of Section~\ref{sec:uq}. The full
$n^2$-interaction regression reduces the reachable-zero rate to
$27\%$ at unchanged unreachable size. In scenario C, the particular
gain-tracking confound simulated (loading $0.8$ on the mean gain
path) inflates magnitude RMSE ($0.28\to0.34$) while the signed
detection size happens to remain low; no general robustness to
exclusion failure is claimed---a reversal-signed gain-tracking
confound would defeat the detector by construction, which is why
Assumption~\ref{as:exo} is maintained rather than tested.

\subsection{Operating characteristics of the rolling monitor}

Algorithm~\ref{alg:monitor}'s transmission alarm is a sequential
procedure over $n(n{-}1)$ edges and overlapping windows, so its
per-path false-alarm rate is governed by the joint design, not the
entrywise size. We measure it directly---for the \emph{transmission
branch only}; the spectral branch is not exercised: coupling switches on at
$\tau=600$ of $T=1000$ ($W=250$, step 5; null arm has diagonal-only
$\bPhi$; 40 paths per arm).
Table~\ref{tab:monitor} reports the frontier under explicit
procedural rules (see the table note): even at the stringent
entrywise $z=2.576$ the null-arm per-path false-alarm rate is $0.90$
(the multiplicity of $20$ edges and $151$ overlapping windows made
explicit), while raising the threshold and lengthening the run rule
brings it to $0.10$---at the cost the frontier makes visible,
$18/40$ missed detections at the most stringent rule within this
horizon, against $0/40$ at the loosest. Detection delay and missed
detections trade off against false alarms along the whole frontier;
sequential calibration---choosing $(z,K_{\mathrm{run}})$ for a
target average run length---is thereby a tabulated design choice
rather than an afterthought.

\begin{table}[t]
\caption{Rolling-monitor operating characteristics.}
\vspace{2pt}
\label{tab:monitor}
\centering
\footnotesize
\setlength{\tabcolsep}{2.6pt}
\begin{tabular}{@{}lccccc@{}}
\toprule
rule ($z$, $K_{\mathrm{run}}$) & FA (null) & FA (pre-onset) & med.\ delay & 90\% delay & miss \\
\midrule
$2.576$, $2$ & 0.90 & 0.75 & 208 & 266 & 0/40 \\
$3.09$, $2$ & 0.28 & 0.20 & 255 & 295 & 4/40 \\
$3.09$, $4$ & 0.23 & 0.03 & 265 & 313 & 5/40 \\
$3.72$, $2$ & 0.15 & 0.00 & 283 & 377 & 16/40 \\
$3.72$, $3$ & \textbf{0.10} & 0.00 & 278 & 371 & 18/40 \\
\bottomrule
\end{tabular}

\vspace{3pt}
\parbox{\columnwidth}{\footnotesize Note: transmission branch only.
Procedural rules: an alarm qualifies when the edgewise signed
exceedance holds at $K_{\mathrm{run}}$ consecutive window ends (151
overlapping ends $250,255,\dots,1000$). FA (null) = fraction of
null-arm paths raising any qualifying alarm; FA (pre-onset) =
fraction of treated paths raising a qualifying alarm declared
before the onset $\tau$ (counted as false alarms, not detections);
delay = first qualifying window end at or after $\tau$, minus
$\tau$ (a run straddling $\tau$ counts by its declaration end);
miss = treated paths with no post-onset qualifying alarm within the
horizon. 40 Monte Carlo paths per arm, so each FA estimate carries
a binomial standard error of about $0.05$--$0.08$; delay quantiles
at the two most stringent rules are computed on detected paths
only.}
\end{table}

\section{Case Study: Leveraged-Fund Rebalancing}\label{sec:case}

The motivating instance of the model is a market hosting leveraged
exchange-traded funds. A fund with leverage multiple $m$ and assets
$A$ must trade $A(m^2-m)r$ near the close in the direction of the
day's return $r$; aggregating over funds on underlying $j$ gives
rebalancing capital $K_j=\sum_f A_f(m_f^2-m_f)$, and
$\gamma_{j,t}=K_{j,t}/\mathrm{ADV}_{j,t}$---\emph{disclosed daily}
through fund shares, NAVs, and multiples---is exactly the known
actuation gain of \eqref{eq:selfref}. The clearing window is the
closing auction; the reversal moment is the overnight correction of
transient displacement, with $\theta$ calibrated externally at
$0.88$ \cite{zhao2026}; the confound $\mathbf{C}$ is the lead--lag
structure of correlated equities \cite{lo1990}. One data limitation must be stated at the outset: intraday bars are
available only post-launch, so the application's regressor is the
close-to-close \emph{full-period} return, not the pre-window
$\mathbf{r}_1$ the clean theory requires; by \eqref{eq:estimand} the
case study is therefore \emph{contaminated screening}---the signed
detection and its gain scaling are reduced-form empirical patterns
whose structural reading is convention-dependent (under the
full-period convention even a static confound acquires gain
dependence through $(\mathbf{I}+\bM_t)^{-1}$, so no invariance is
claimed), and structural readings of $\hat\bPhi$ and
$\rho(\hat\bL)$ carry the $O(\|\bM\|)$ convention dependence, which
the companion paper addresses by reporting both polar mappings. Full
institutional detail, inference, and robustness are in
\cite{woo2026arxiv}; here we illustrate the regression/screening mechanics of
Algorithms~\ref{alg:ident}--\ref{alg:monitor}; the identification
guarantees require the pre-window moment and do not transfer
verbatim.
The sample freeze date for all reported Korean numbers is
August~21,~2026; the U.S. panel is final (CRSP coverage ends
December 2024).

\emph{Korea (2026).} Sixteen single-stock LETFs on Samsung
Electronics and SK~Hynix launched May~27,~2026, into one closing
auction---a two-channel system with disclosed, strongly time-varying
gains ($\gamma$ of order one; Figure~\ref{fig:gains} shows the
observed gain paths, the empirical realization of the known
modulating sequence of Figure~\ref{fig:block}). The interaction moment detects
one-directional coupling from the large complex (Hynix, worldwide
leveraged-asset peak KRW~36tn) into the smaller one (Samsung, 12.5tn): the
launch-break contrast of the cross-reversal slope is $z=-2.82$
(Newey--West $-2.72$; leave-one-day-out range $[-3.05,-2.64]$), more
negative than all $182$ ordered placebo pairs of fourteen non-treated
large caps (rank $1/183=0.0055$), with a ten-day block-bootstrap
interval excluding zero. Figure~\ref{fig:detector} shows the
rolling detector of Algorithm~\ref{alg:monitor} on these data: the
cross-slope is stable through the pre-launch period and turns only
as post-launch observations enter the window, with fourteen weekly
placebo break dates all yielding $|z|\le1.9$.

\begin{figure}[t]
\centering
\includegraphics[width=0.95\columnwidth]{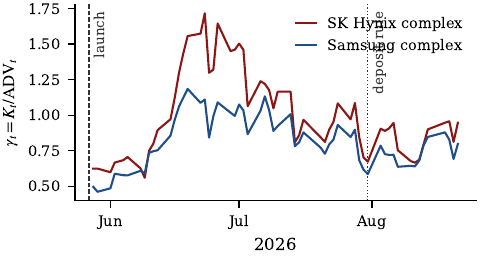}
\caption{Observed actuation gains $\gamma_{j,t}$ of the two Korean
complexes.}
\vspace{2pt}
\parbox{\columnwidth}{\footnotesize Note: constructed daily from disclosed fund shares,
NAVs, and leverage multiples over trailing 20-day traded value;
entered at $t-1$. Dashed: product launch; dotted: the deposit
regulation, which collapsed turnover while leaving the gains---the
identification resource---intact.}
\label{fig:gains}
\end{figure}

\begin{figure}[t]
\centering
\includegraphics[width=0.95\columnwidth]{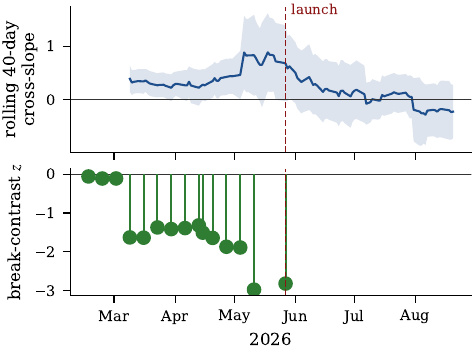}
\caption{The rolling detector on the Korean data
(receiver: Samsung; sender: Hynix).}
\vspace{2pt}
\parbox{\columnwidth}{\footnotesize Note: top---rolling 40-day cross-slope with 95\%
bands; bottom---launch-break contrast $z$ at weekly placebo break
dates and the true launch (dashed). Post windows left of the launch
contain increasing fractions of true post-launch observations.}
\label{fig:detector}
\end{figure} A reduced-form pattern consistent with the gain-scaling logic of
Theorem~\ref{thm:ident} (not a validation of it, given the
contaminated regressor) appears: the interaction of the
sender's return with its standardized lagged gain is
$-0.106$ ($t=-2.29$) with the standalone gain main effect
insignificant---the coupling tracks $\bgam_t$, which a static
lead--lag cannot mimic. The assembled empirical baseline (reverse
link set to zero, as it is not detected) is triangular, so its
spectral radius $\rho=0.610$ is driven entirely by the imported
diagonal and is unaffected by the detected cross channel;
transmission is the operative alarm, with
$\hat M_{21}\approx0.22$ under the baseline convention: roughly a
fifth of the sender's innovation is imported into the receiver's
close, invisible to the receiver's own gain ($0.24$). Read against
the dynamic boundary \eqref{eq:chaineig}---as a screening trace, not
a calibrated threshold test (Section~\ref{sec:uq})---the sender
diagonal imported from the companion calibration ($0.61$) exceeds
$1-\theta/2=0.56$, while the self-contained estimate from our own
sample's own-reversal moment ($0.42$, $z=-2.0$) does not: the
boundary comparison is calibration-dependent and is reported as
such.

\emph{United States (2022--2024, final sample).} Two U.S. panels
provide the scale placebo and a live diagnostic of
Theorem~\ref{thm:ident}(iii). The
MicroStrategy--Bitcoin--Coinbase complexes form an approximately
symmetric configuration with liquidity-scaled capital comparable to
Korea ($\gamma_{\mathrm{MSTR}}$ up to $1.72$); all six treated
directions are null ($|z|\le1.45$), as are placebos. A second,
\emph{staggered} panel---the TSLA, NVDA, and AAPL single-stock
complexes, whose fourteen funds launched at six distinct dates over
2022--2023---adds two findings on final (2022--2024) data. First,
the full ordered-pair interaction grid is null after multiplicity
adjustment (one marginal cell at $t=-1.77$ among twelve recorded
statistics; Table~\ref{tab:usstag}), consistent with
the graded response surface: gains reach $0.53$--$0.54$ but the
venue impact is low. Second, and diagnostically: despite six
distinct launch dates, the gain \emph{levels} are correlated
$0.81$--$0.91$ across complexes because post-launch capital growth
trends are common, and the rolling-window Gram condition number has
median $45$ (deciles $17$--$118$)---staggered onsets do
\emph{not} deliver uncorrelated gain paths
(Figure~\ref{fig:usgamma}), so the conditioning diagnostic of
Theorem~\ref{thm:ident}(iii) and Algorithm~\ref{alg:ident} must be
evaluated on the estimation window, not inferred from institutional
timing. Table~\ref{tab:usstag} reports the full grid.

\begin{figure}[t]
\centering
\includegraphics[width=0.95\columnwidth]{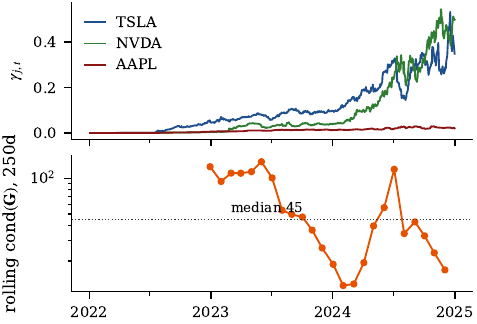}
\caption{U.S. staggered panel: gain paths and identification
conditioning.}
\vspace{2pt}
\parbox{\columnwidth}{\footnotesize Note: top---liquidity-scaled
complex gains $\gamma_{j,t}$ for the TSLA, NVDA, and AAPL
single-stock LETF complexes (CRSP, 2022--2024); fund launches occur
at six distinct dates, yet post-launch capital growth makes the
paths co-trend (level correlations $0.81$--$0.91$).
Bottom---rolling 250-day condition number of the standardized
gain-path Gram matrix (log scale; dotted line: median $45$), the
secondary design diagnostic; the primary diagnostic
$\lambda_{\min}(\hat{\mathbf{Q}})$ additionally involves the signal
covariance (Theorem~\ref{thm:ident}(iii)) and is reported with the
estimation windows of the companion replication files.}
\label{fig:usgamma}
\end{figure}

\begin{table}[t]
\caption{U.S. staggered panel: cross-reversal interaction grid
(final 2022--2024 sample).}
\vspace{2pt}
\label{tab:usstag}
\centering
\footnotesize
\setlength{\tabcolsep}{5pt}
\begin{tabular}{@{}lccc@{}}
\toprule
direction & $\hat b_{R_A}$ ($t$) & $\hat b_{\gamma_A\times R_A}$ ($t$) & $N$ \\
\midrule
TSLA$\to$NVDA & $+0.007\ (+0.2)$ & $+0.005\ (+0.1)$ & 500 \\
TSLA$\to$AAPL & $+0.005\ (+0.4)$ & $+0.004\ (+0.6)$ & 500 \\
NVDA$\to$TSLA & $+0.017\ (+1.5)$ & $-0.065\ (-1.6)$ & 500 \\
NVDA$\to$AAPL & $-0.005\ (-1.3)$ & $+0.016\ (+1.1)$ & 500 \\
AAPL$\to$TSLA & $+0.086\ (+0.8)$ & $-0.163\ (-1.77)$ & 500 \\
AAPL$\to$NVDA & $-0.470\ (-0.9)$ & $-0.211\ (-0.8)$ & 500 \\
\bottomrule
\end{tabular}

\vspace{3pt}
\parbox{\columnwidth}{\footnotesize Note: receiver's overnight
return on sender's day return $R_A$, own return, leave-out
equal-weight control (GOOGL/MSFT/AMZN), QQQ, the sender's
standardized lagged gain $\tilde\gamma_A$, and the interaction
$\tilde\gamma_A\times R_A$; HC standard errors. Under the entrywise signed $5\%$ rule the
single cell AAPL$\to$TSLA ($t=-1.77$) is a marginal flag---about
what twelve statistics produce by chance at $5\%$---and no cell
survives multiplicity adjustment; the grid is null, the
graded-response prediction for a low-impact closing venue.}
\end{table} The null is the
response surface of the model, not a failure: an estimator that
found coupling wherever assets correlate would have failed this
scale placebo.

\section{Conclusion}\label{sec:concl}

Known time-varying actuation, combined with signal excitation and
conditional orthogonality (Assumption~\ref{as:stoch}), identifies
coupled feedback networks from outputs alone: under gain exogeneity
(Assumption~\ref{as:exo}), the gain path plays the role that
references, probing signals, and designed precoders play elsewhere,
and a partial-reversal moment converts the modulated response into
a signed estimator robust to gain-invariant confounds. The
identification condition is a rank---sharply, a conditioning---
property of the joint gain design; the estimand is the resolvent
sensitivity---the object transmission screening requires and the
first-stage input to spectral recovery; and the failure modes
differ in kind: constant-gain non-identification is detectable,
while a reversal-signed gain-tracking confound is observationally
indistinguishable---the price of the maintained
assumption. Proposition~\ref{prop:mmgeom}, Algorithm~\ref{alg:sca}, and Theorem~\ref{thm:localcapture} supply a
log-coordinate geometry, an operational hard-constrained
refinement, and local branch/statistical guarantees;
Proposition~\ref{prop:uniformswitch} reduces uniform spectral
inference to one open input, a margin bound valid at degeneracy.
What remains is that bound, a scalable subproblem solver, exact
nonlinear estimation with latent-input contamination, regularized
estimation for large $n$, and fixed-$\alpha$ selection and
sequential calibration beyond Theorem~\ref{thm:bootstrap}; on the
application side, any self-referencing mandate with disclosed
capacity---option hedging, volatility targeting, demand
response---is a candidate network.

\appendices
\section{Extended constrained-refinement and inference diagnostics}
\begin{table}[h]
\caption{Exact-model refinement from matched SVD initializers
($n=5,T=750$, 25 paths).}
\centering
\footnotesize
\setlength{\tabcolsep}{4pt}
\begin{tabular}{@{}lccccc@{}}
\toprule
method & med. $J/J^*$ & $J/J^*>2$ & $\rho$ bias & $\bPhi$ RMSE & feasible \\
\midrule
box-TRF & 0.960 & 2/25 & $+0.100$ & 0.078 & 22/25 \\
hard SCA+polish & 0.956 & 0/25 & $+0.043$ & 0.055 & 25/25 \\
\bottomrule
\end{tabular}
\end{table}
All hard-SCA paths are monotone and their accepted convex
subproblems and feasible polishing steps succeed. On the adversarial
path where box-TRF gives $J/J^*=11.16$ and $\hat\rho=1.324$, hard
SCA gives $0.944$ and $0.324$. After projection, all 72 perturbations
of the SVD log-initializer at radii $0.05$--$1.2$ remain feasible
and monotone and finish below $J/J^*=0.987$. A separate max-gain
experiment with a fixed exploratory bound $0.60$ shows that the
particular soft penalty $\lambda=30$ compresses the bad branch from
$0.84$ to $0.77$ without escaping it; this does not rule out other
penalties.

For inference, a regime-stratified split uses one half for the SVD
initializer and one for refinement and repeats split, SVD, hard SCA,
and the spectral functional in every bootstrap draw. At $T=750$
(20 outer paths, 59 draws), all draws are feasible, but coverage of
structural $\rho$ is $50\%$ for basic and $40\%$ for percentile
intervals (mean bias $+0.034$). Split-jackknife extrapolation worsens
bias from $+0.043$ to $+0.050$ and RMSE from $0.048$ to $0.068$ on
25 paths. These are negative finite-sample results; they do not
invalidate the pointwise branch-stability theorem, but they prevent
transferring two-stage bootstrap claims to the hard-SCA estimator.

\begin{table}[h]
\caption{Finite-sample inference diagnostics for hard-SCA $\rho$.
Size/coverage paths are at true $\rho=0.30$.}
\centering
\footnotesize
\setlength{\tabcolsep}{4pt}
\begin{tabular}{@{}lccc@{}}
\toprule
method & size/coverage & power $0.33$ & power $0.36$ \\
\midrule
split full-pipeline basic CI & 0.50 coverage & --- & --- \\
null-imposed profile LR & 0.10 size & --- & --- \\
null-imposed outward score & 0.067 size & 0.25 & 0.85 \\
$m$-out-of-$n$ hull & 0.30 coverage & --- & --- \\
projection confidence box & 1.00 coverage & --- & --- \\
\bottomrule
\end{tabular}
\end{table}
The score experiment uses 30 boundary paths and 99 restricted-null
bootstrap draws; power uses 20 paths and 59 draws per alternative.
The projection interval is conservative but uninformative (mean
spectral length $1.70$), whereas the subsampling hull is short
($0.042$) and badly undercovers. Cross-fitted one-step correction
reduces bias only from $+0.034$ to $+0.032$ and every path hits its
step cap, so it fails the regularity diagnostic and is not used.
The null-imposed outward score is therefore the only procedure in
this grid meeting the predeclared 3--7\% size criterion while
showing monotone local power.

\end{document}